\documentclass[lettersize,journal]{IEEEtran}
\usepackage{amsmath,amsfonts}
\usepackage{array}
\usepackage{textcomp}
\usepackage{stfloats}
\usepackage{url}
\usepackage{verbatim}
\usepackage{graphicx, subfig}
\usepackage{cite}
\hyphenation{op-tical net-works semi-conduc-tor IEEE-Xplore}

\usepackage{amsthm, mathtools}
\usepackage[noend]{algpseudocode}
\usepackage{algorithmicx,algorithm}
\newtheorem{theorem}{Theorem}
\newtheorem{corollary}{Corollary}
\newtheorem{lemma}{Lemma}
\newtheorem{definition}{Definition}

\newenvironment{proofsketch}{\begin{proof}[\textup{\noindent \emph{\textbf{Proof Sketch}}}]}{\end{proof}}

\usepackage{xcolor}
\usepackage{hyperref}
\usepackage{cleveref}

\usepackage{booktabs} 

\makeatletter
\newcommand*{\new}{\@ifnextchar\bgroup{\new@}{\color{blue}}}
\newcommand*{\new@}[1]{{\color{blue}{#1}}}

\begin{document}

\title{Regret Minimization in Population Network Games:\\Vanishing Heterogeneity and Convergence to Equilibria}

\author{Die Hu\textsuperscript{*}, \IEEEmembership{}
    Shuyue Hu\textsuperscript{*,\dag}, \IEEEmembership{}
    Chunjiang Mu, \IEEEmembership{}
    Shiqi Fan, \IEEEmembership{}
    Chen Chu, \IEEEmembership{}
    Jinzhuo Liu, \IEEEmembership{}
    Zhen Wang\textsuperscript{\dag}, \IEEEmembership{Fellow,~IEEE}
    \thanks{
        This work was supported by the National Science Fund for Distinguished Young Scholarship of China (No. 62025602), the National Natural Science Foundation of China (Nos. U22B2036, 11971421, 12271471, 62066045), Excellent Youths Project for Basic Research of Yunnan Province (No. 202101AW070015), in part by the Fundamental Research Funds for the Central Universities (Nos. G2024WD0151,  D5000240309), the Tencent Foundation and XPLORER PRIZE.
        (\textsuperscript{\dag} Corresponding\  authors: Shuyue\  Hu, Zhen\  Wang.)

        Die Hu is with the School of Mechanical Engineering, Northwestern Polytechnical University, Xi'an 710072, China and the Department of Computing, Hong Kong Polytechnic University, Hong Kong, China (e-mail: diane.diehu@gmail.com).

        Shuyue Hu is with the Shanghai Artificial Intelligence Laboratory, Shanghai, China (e-mail: hushuyue@pjlab.org.cn).

        Chen Chu is with the School of Statistics and Mathematics, Yunnan University of Finance and Economics, Kunming 650221, China and the School of Artificial Intelligence, Optics and Electronics (iOPEN), Northwestern Polytechnical University, Xi'an 710072, China (e-mail: chuchenynufe@hotmail.com).

        Jinzhuo Liu is with the Department of the School of Software, Yunnan University, Kunming 650091, China and the School of Artificial Intelligence, Optics and Electronics (iOPEN), Northwestern Polytechnical University, Xi'an 710072, China (e-mail: jinzhuo.liu@hotmail.com).

        Shiqi Fan is with the School of Cybersecurity, Northwestern Polytechnical University, Xi'an, China and in Department of Data Science and Artificial Intelligence, the Hong Kong Polytechnic University, Hong Kong, China (e-mail:fsq@mail.nwpu.edu.cn).

        Zhen Wang, and Chunjiang Mu are with the School of Cybersecurity and the School of Artificial Intelligence, Optics and Electronics (iOPEN), Northwestern Polytechnical University, Xi'an 710072, China (e-mail: zhenwang0@gmail.com; mcj\_nwpu@163.com).
    }
}

\maketitle

\begin{abstract}
    Understanding and predicting the behavior of large-scale multi-agents in games remains a fundamental challenge in multi-agent systems. This paper examines the role of heterogeneity in equilibrium formation by analyzing how smooth regret-matching drives a large number of heterogeneous agents with diverse initial policies toward unified behavior. By modeling the system state as a probability distribution of regrets and analyzing its evolution through the continuity equation, we uncover a key phenomenon in diverse multi-agent settings: the variance of the regret distribution diminishes over time, leading to the disappearance of heterogeneity and the emergence of consensus among agents. This universal result enables us to prove convergence to quantal response equilibria in both competitive and cooperative multi-agent settings. Our work advances the theoretical understanding of multi-agent learning and offers a novel perspective on equilibrium selection in diverse game-theoretic scenarios.

\end{abstract}

\begin{IEEEkeywords}
    Continuous-time learning dynamics (CTLD), heterogeneous networked systems, multi-agent reinforcement learning (MARL), and game theory (GT).
\end{IEEEkeywords}

\IEEEpeerreviewmaketitle

\section{Introduction}
\IEEEPARstart{U}{nderstanding} the dynamics of learning behavior in games has long been a fundamental problem studied in multi-agent learning~\cite{zhang2018data,zhang2020rnn,li2022online}, contributing to numerous applications \cite{heusel2017gans,tembine2019deep,jiang20233d,jiang2023dynamic}.
Regret minimization stands out as fundamental to the development and learning of algorithms, leveraging past decisions to offer a practically efficient method for approximating game-theoretic equilibria.
Informally, in the context of opponents' strategies, regret minimization iteratively optimizes a player's strategy by minimizing the distance between the learner's actions and either the best fixed action in hindsight (external regret \cite{hannan1957approximation}) or a fixed strategy (swap regret \cite{blum2007external}).

It is well known that Regret minimization and its many variants can guarantee time-average convergence to certain classes of equilibria in games~\cite{brown2018superhuman,brown2019superhuman}. For instance, in general-sum games, the time-averaged strategy profile gradually converges to a coarse correlated equilibrium, while in wo-player zero-sum games, it converges to a Nash equilibrium. Hart and Mas-Colell~\cite{2000hartSimpleAdaptiveProcedure} proposed the \emph{regret matching} algorithm whereby agents track regrets to guide deviation from current strategies in proportion to accumulated non-negative regrets, making decision-making more adaptive and look forward. Over the years, due to their solid theoretical guarantees and strong empirical performance, regret matching and its variant, particularly in the framework of counterfactual regret minimization, have become an essential ingredient for the practical state of the art in multiple breakthroughs such as Computer Poker Research \cite{brown2019superhuman, hart2013simple,zhang2021multi,zhang2023dynamics}.

However, such results do not necessarily imply that the \emph{real-time} behavior of system always converges fully to these equilibrium points. Even in the simplest class of games, such as two-player two-action zero-sum games, the real-time dynamics of the system can exhibit chaotic, non-convergent behavior \cite{bailey2019fast,zhu2020online,chen2022online,piliouras2022evolutionary}. This is especially true in more complex multi-agent systems \cite{chotibut2020route,bielawski2021follow,czechowski2023non}, where convergence is even harder to guarantee. Furthermore, while the regret matching algorithm and its variations have been proven to converge in potential games, this convergence does not always extend to converge towards an average-$\epsilon$-Nash equilibrium instead \cite{blum2010routing}. In fact, such large-scale multi-agent systems are particularly appealing and have attracted increasingly more interest \cite{blum2006routing,lam2016learning,chotibut2020route,luo2021multiagent,huHeterogeneousBeliefsMultiPopulation2023}. In real life, many multi-agent systems -- such as traffic networks, stock markets, and smart grids -- naturally involve a great number of agents, and the property of heterogeneity widely exists~\cite{li2017global,hui2019game}.
This raises a series of fundamental questions:
\emph{will the dynamic of a system of heterogeneous regret-minimizing agents equilibrate? If so, what class of equilibria does the real-time system behavior evolve in, and what can we say about the heterogeneity?}

Motivated by these problems, in this paper, we study a variant of regret matching (called \emph{smooth regret matching}) and consider a common large-scale multi-agent setting.
Specifically, this setting contains \emph{a continuum of heterogeneous agents} with diverse policies initially; over time, agents adapt their policies through smooth regret matching. In recent work that we build upon, Wang et al.~\cite{2022wangModellingDynamicsRegret} established a partial differential equation, akin to the \emph{continuity equation} commonly encountered in the studies of physical systems, to model the dynamics of a continuum of heterogeneous smooth regret-matching agents. Whereas their finding connects two seemingly unrelated literature of regret minimization and continuity equation, the above questions still lack conclusive answers.

\emph{\textbf{Our model.}}
Our paper answers the above questions about smooth regret matching in a general class of population network games (PNG). In analogy to the conventional network game \cite{kearns2001graphical}, a PNG is defined over a graph where each vertex represents a population (or continuum) of agents, and each edge defines a series of 2-player subgames between individuals from two adjacent populations.
PNGs are game-theoretic models for real-world systems where the graph structure is evident (such as multi-army combats, multi-national international trade, or traffic networks) \cite{czechowski2022poincare} and encompass all the 2-population games commonly studied in evolutionary game theory \cite{sandholm2010population,li2016distributed,liu2018modeling}.
At every time step, each individual of a population plays a policy in all its 2-player subgames against its opponents from the adjacent populations.
Agents apply smooth regret matching to track regrets for past plays and to select future plays based on regrets using the softmax function.

\emph{\textbf{Our Approach \& Contributions.}}
We define the system state as a probability distribution over the space of regrets.
Such a definition of the system state follows \cite{2022wangModellingDynamicsRegret} and naturally accounts for heterogeneous agents (with different regrets and policies) under our concerned setting.
When agents learn in games and update their regrets, this probability distribution changes correspondingly.
We track this change and show in \Cref{theorem:regretdynamics} that in the continuous-time limit, the dynamic of that distribution follows the continuity equation.
This theorem generalizes that of previous work \cite{2022wangModellingDynamicsRegret} to PNGs.
This indicates that the link between the literature on regret minimization and the continuity equation, which remains underexplored, could have significant and far-reaching implications.

However, the continuity equation described by PDEs is challenging to solve directly. Going beyond previous work, we adopt the moment closure technique~\cite{goodman1953population,whittle1957use}, which approximates PDEs by expressing higher-order terms through lower-order ones, making them tractable. This approach is particularly useful for analyzing large systems, such as populations or particle ensembles. The \textit{mean} regret (first moment) represents the average system state, while the \textit{variance} of regrets (second moment) quantifies system heterogeneity.
Theorem \ref{theorem:heto_dERdt} reveals that a population's mean regret is intricately linked to variance, indicating that system heterogeneity influences the evolution of the average state. More importantly, leveraging variance properties, we prove in \Cref{theorem:VarDecay} that regret variance decreases over time and asymptotically vanishes. Notably, our proof does not depend on the 2-player subgame payoff structure or specific action selection functions, provided they are continuously differentiable with respect to regrets.
Thus, vanishing heterogeneity appears to be a general phenomenon in PNGs, despite contradicting intuitive expectations. This insight deepens our understanding of regret dynamics, shedding light on the natural homogenization of large populations.

As another key contribution, we provide a convergence guarantee for learning in PNGs, showing that heterogeneous populations converge to a specific type of quantal response equilibrium (QRE), a refinement of Nash equilibrium that accounts for bounded rationality. Our proof techniques, of independent interest, bridge smooth regret matching with the well-studied smooth Q-learning algorithm.
Using the asymptotically autonomous theorem \cite{1992thiemeConvergenceResultsPoincareBendixson}, we prove that policy dynamics become asymptotically autonomous, with the limit equation resembling smooth Q-learning dynamics \cite{2003tuylsSelectionmutationModelQlearning} (shown in \Cref{lemma:dxdt}). This implies that as heterogeneity vanishes, policy dynamics align with smooth Q-learning. Building on this connection, we establish our convergence result in \Cref{theorem:convergence}, proving that under smooth regret matching, agents' policies converge to a unique QRE in weighted zero-sum PNGs and to a compact connected set of QREs in weighted potential PNGs.
This result highlights convergence behaviors in both competitive and cooperative settings, covering most two-population cases. It applies to a wide range of strategic interactions, from adversarial scenarios like cybersecurity and financial markets to cooperative ones such as traffic routing and energy management, making it broadly relevant across disciplines.

Last but not least, we validate our theoretical findings through agent-based simulations across six game types and three network structures. Our results show that agents' policies become increasingly concentrated over time, ultimately converging to a common policy, even from highly diverse initial distributions, confirming vanishing heterogeneity. Additionally, by analyzing agents' mean policy over time, we verify its convergence to a quantal response equilibrium, aligning with our theoretical predictions. In cases with multiple QREs, our experiments suggest that system heterogeneity may influence equilibrium selection. These findings reinforce our theoretical insights and highlight the broader implications of heterogeneity in strategic interactions.

To summarize, our key contributions are as follows.
\begin{itemize}
    \item We model the real-time regret dynamic in population network games with the continuity equation, thereby broadening the formal connection between two seemingly unrelated literature: regret minimization and continuity equation.
    \item We uncover a widespread phenomenon of vanishing heterogeneity among a large number of regret-minimizing agents, which has not been reported previously.
    \item We provide a convergence guarantee for regret minimization in weighted zero-sum and weighted potential population network games.
    \item We empirically corroborate our theory with extensive numerical simulations and agent-based simulations in six different types of games and three network structures.
\end{itemize}

\section{Preliminaries}\label{sec:preliminaries}

This section defines the population network game and the smooth regret matching.

\subsection{Population Network Games}
Population network games (PNGs, also called graphical polymatrix games) have attracted much recent interest \cite{kearns2001graphical}; they encompass all the 2-player games, and easily identifiable subclasses of real-world systems where the graph structure is evident \cite{czechowski2022poincare}. These games capture intricate dynamics across populations, as detailed in~\cite{szabo2007evolutionary}. For instance, in combat, each army, as a node, engages in competition and cooperation with others and is controlled by independent learning agents based on local observations. Moreover, in ecology, nodes represent species or populations, and within each, individual organisms act independently, contributing to strategic diversity and mutual influence through interactions like predation and symbiosis.

A population network game
$\Gamma=((V,E),(S_i)_{\forall i\in V},(\mathbf{A}_{ij})_{(i,j)\in E})$
is defined over a graph $(V, E)$, where $V=\{1,\ldots,n\}$ is the set of vertices each represents a population (or continuum) of agents, and $E$ is the set of edges each defines a series of two-player subgames between two populations.
For each population $i\in V$, agents of this population have a finite set $S_i$ of actions with generic elements $a\in S_i$.
We denote the policy of a generic agent $k$ of population $i\in V$ by a vector $x_{i_k}\in \Delta_i$, where $\Delta_i=\{x_{i_k}\in\mathbb{R}^{\vert S_i\vert}\vert\sum_{a\in S_i}x_{i_k}(a)=1,x_{i_k}(a)\geq0,\forall a\in S_i\}$ is a probability simplex over the action set $S_i$.
For each edge $(i,j) \in E$, every agent in population $i$ is randomly paired up with another agent in population $j$ to play a two-player subgame.
Subgames don't have to be the same across different pairs; however, it's crucial that the dimensions of action spaces are consistent across them.
Given the two-player subgames between population $i$ and $j$, we denote the payoff matrix of populations $i$ and population $j$ by $\mathbf{A}_{ij}$ and $\mathbf{A}_{ji}$, respectively.
Note that at any given time step, each agent chooses a policy and plays that policy in all its two-player subgames.
Let $v_i$ be the set of agents in population $i$ and $V_i=\{j\in V:(i,j)\in E\}$ be the set of population $i$'s adjacent populations.
Given a mixed strategy profile $(\mathbf{x}_{i_k}, \{\mathbf{x}_{j}\}_{j\in V_i})$, the expected payoff for agent $k$ in population $i$ in all its two-player subgames is
$$
    u_{i}(\mathbf{x}_{i_k}, \{\mathbf{x}_{j}\}_{j\in V_i})=\frac{1}{\vert V_i\vert}\sum_{j\in V_i}\frac{1}{\vert v_j\vert}\sum_{k' \in v_j}\mathbf{x}_{i_k}^\top \mathbf{A}_{ij}\mathbf{x}_{j_{k'}},
$$
where $k'$ denotes an arbitrary agent in population $j$.
According to the definition, the payoff $u_i(\mathbf{x}_{i_k}, \{\mathbf{x}_{j}\}_{j\in V_i})$ encapsulates the game payoff from all interactions with agent $k$, assuming an infinite number of agents in each population and, consequently, an infinite number of interactions within each population. In view of the linearity of expectation, meaning that $\mathbb{E}[\mathbf{x}_{j}^{t}]=\frac{1}{v_j}\sum_{k'\in j}\mathbf{x}_{j_{k'}}\approx\bar{\mathbf{x}}_{j}^t$.
Then we rewrite the expected payoff function as
\begin{equation}
    u_{i}(\mathbf{x}_{i_k}, \{\mathbf{x}_{j}\}_{j\in V_i})=\frac{1}{\vert V_i\vert}\sum_{j\in V_i}\mathbf{x}_{i_k}^\top\bar{\mathbf{A}}_{i_kj}\bar{\mathbf{x}}_{j},
\end{equation}
where $\bar{\mathbf{A}}_{i_kj}$ is the weighted payoff matrix for agent $k$ interacting with $j$ populations.

Put differently, $\mathbf{r}_{i_k}^t$ can be considered as the mean instantaneous regret $\bar{\mathbf{r}}_{i_k}^t$. In the following, we remark $\mathbf{r}_{i_k}^t$ as $\bar{\mathbf{r}}_{i_k}^t$.
In view of the linearity of expectation, $\bar{\mathbf{r}}_{i_k}^t$ also can be considered as an unbiased estimator, meaning that $\mathbb{E}[\mathbf{r}_{i_k}^{t}]\approx\bar{\mathbf{r}}_{i_k}^t$.
Moreover, from Hu et al.~\cite{2019huModellingDynamicsMultiagent} and Wang et al.~\cite{2022wangModellingDynamicsRegret}, the mean policy $\bar{\mathbf{x}}_{j\in V_i}$ is actually an unbiased estimator of the population $j$'s expected policy, and the variance of that estimator equals 0.

We remark that our PNG generalizes the traditional network game, considering that a vertex can represent either a heterogeneous or a homogeneous agent population.
If each vertex in our PNG represents a homogeneous agent population, then our PNG is effectively the same as the traditional network game.

\subsection{Smooth Regret Matching}
By convention, we define instantaneous regret of not having chosen an action as the difference between the instantaneous payoff of that action and the expected payoff of the strategy actually in use.
Consider an arbitrary agent $k$ in a population $i\in V$. At each time step $t$, agent $k$ takes the strategy $\mathbf{x}_{i_k}$ to play with all opponents from adjacent populations. Let $\mathbf{r}_{i_k}=[\mathbf{r}_{i_k}^t(a)\vert\forall a\in S_i]\in\mathbb{R}^{\vert S_i\vert}$ be the instantaneous regret of not having chosen that action, given by
\begin{equation}\label{eq:Erika}
    \begin{aligned}
        \mathbf{r}_{i_k}^t(a) & =u_{i_k}(a,\bar{\mathbf{x}}_{-i})-u_{i_k}(\mathbf{x}_{i_k}^t,\bar{\mathbf{x}}_{-i})                                     \\
                              & =(\mathbf{e}_{i_ka}^\top-\mathbf{x}_{i_k}^t)\frac{1}{\vert V_i\vert}\sum_{j\in V_i}\mathbf{A}_{ij}\bar{\mathbf{x}}_j^t,
    \end{aligned}
\end{equation}
where $\mathbf{e}_{i_ka}$ is a unit vector where the $a$-th element is 1. For notation convenience, sometimes we write $u_{i_ka}^t(\mathbf{x}_{-i})\coloneqq u_{i_k}(a,\{\mathbf{x}_j^t\}_{(i,j)\in E})$.
Agent $k$'s cumulative regret of not having chosen each action $a\in S$ over history is defined as
\begin{equation}\label{eq:Ria}
    \mathbf{R}_{i_k}^{t+1}(a)=\frac{1}{t}\sum_{\tau=1}^t\mathbf{r}_{i_k}^\tau(a).
\end{equation}
We let $\mathbf{R}_{i_k}$ be the vector of cumulative regrets. For simplicity, we call $\mathbf{R}_{i_k}$ to be agent $k$'s regret in the following.

In \cite{2000hartSimpleAdaptiveProcedure},
regrets are considered to be non-negative, and agents' probabilities of choosing an action are proportional to the regret for not having chosen that action. Wang et al.~\cite{2022wangModellingDynamicsRegret} generalized regret matching to allow negative regrets and adopted the softmax function for action selection:
\begin{equation}\label{eq:xia}
    \mathbf{x}_{i_k}^{t+1}(a)=\frac{\exp[\lambda \mathbf{R}_{i_k}^{t}(a)]}{\sum_{b\in S_i}\exp[\lambda \mathbf{R}_{i_k}^{t}(b)]},
\end{equation}
where $\lambda\in [0,+\infty)$ denotes the temperature. When $\lambda=0$, the agent always acts completely random and chooses the strategy uniformly. Whereas when $\lambda\to +\infty$, the agent greedily chooses the action with the highest regret value.

\begin{algorithm}
    \caption{Smooth Regret Matching in Population Network Games}
    \label{alg:smooth_regret_matching}
    \begin{algorithmic}[1]
        \State Initialize the regret matrix $R$ for each agent with zeros.
        \State Initialize the policy matrix $policy$ with an arbitrary distribution.
        \For{each round $t$}
        \For{each population $i$}
        \For{each agent $k$}
        \State Calculate the instantaneous regret by \Cref{eq:Erika}.
        \State Update the regret by \Cref{eq:Ria}.
        \EndFor
        \EndFor
        \ForAll{agents $k$}
        \State Update the policy by \Cref{eq:xia}.
        \EndFor
        \EndFor
    \end{algorithmic}
\end{algorithm}


\begin{table}[!t]
    \caption{Important Notations}
    \scalebox{1.15}{
        \begin{tabular}{ll}
            \hline
            \textbf{Symbol}        & \textbf{Description}                                 \\
            \hline
            $\Gamma$               & The population network game                          \\
            $i$                    & Population $i$                                       \\
            $V_i$                  & The neighbour population set of population $i$       \\
            $S_i$                  & The action set of population $i$                     \\
            $E=\{e_{ij}\}$         & The relationship between two populations $i$ and $j$ \\
            $\Delta_i$             & Strategy spaces for agents of population $i$         \\
            $\mathbf{r}_{i_k}$     & Instantaneous regret of agent $k$                    \\
            $\mathbf{R}_{i_k}$     & Regret of agent $k$                                  \\
            $\mathbf{x}_{i_k}$     & Strategy profile of agent $k$                        \\
            $p(\mathbf{R}_{i}, t)$ & Probability density of regrets                       \\
            $\lambda$              & Temperature parameter                                \\
            \hline
        \end{tabular}
    }
\end{table}

\section{Modelling Regret Dynamics in Population Network Games}
\label{sec:dynamics}
In this section, we model the regret dynamic in population network games. We shall first describe the regret dynamic with a partial differential equation (PDE) using the master equation approach \cite{2022wangModellingDynamicsRegret}.
Then, going beyond \cite{2022wangModellingDynamicsRegret}, we shall adopt the moment closure technique and establish two ordinary differential equations (ODEs) that govern the time evolution of the population trend and the population heterogeneity.

We consider the system state to be a probability distribution over the space of regrets $\Omega=\prod_{i\in V}\Omega_i$.
We denote the probability density of regrets $R_i$ at time $t$ by $p(\mathbf{R}_i,t)$. Intuitively, the density $p(\mathbf{R}_i,t)$ indicates the fraction of agents having the regret $\mathbf{R}_i$ in population $i$. In the following theorem, we show that the time evolution of $p(\mathbf{R}_i,t)$ is governed by a PDE.

\begin{theorem}[Population Dynamics]
    \label{theorem:regretdynamics}
    For each population $i\in V$, the dynamic of the probability density function $p(\mathbf{R}_i,t)$ is governed by
    For each population $i\in V$, the dynamic of the probability density function $p(\mathbf{R}_i,t)$ is governed by
    \begin{equation}\label{eq:PDF_R1}
        \frac{\partial p(\mathbf{R}_i,t)}{\partial t}=-\nabla \cdot[p(\mathbf{R}_i,t)\frac{\bar{\mathbf{r}}_i-\mathbf{R}_i}{t}],\forall i\in V
    \end{equation}
    where $\nabla \cdot$ is the divergence operator, and $\bar{\mathbf{r}}_i$ is the mean immediate regret of population $i$ such that $\forall a\in S_i$,
    \begin{equation}\label{eq:Eria_int}
        \begin{split}
            \bar{r}_{ia}= & \int\int_{\prod_{j\in V_i}}[\mathbf{e}_{ia}-\mathbf{x}_i]^\top\frac{1}{\vert V_i\vert}\sum_{j\in V_i}\mathbf{A}_{ij}\bar{\mathbf{x}}_j \\
                          & (p(\mathbf{R}_i, t)\prod_{j\in V_i}p(\mathbf{R}_j,t))(d\mathbf{R}_i\prod_{j\in V_i}d\mathbf{R}_j),
        \end{split}
    \end{equation}
    where $\bar{\mathbf{x}}_j=\int_{\prod_{b\in S_j}}\frac{\exp(\lambda\mathbf{R}_{jb})}{\sum_{b'\in S_j}\exp(\lambda\mathbf{R}_{jb'})}(\prod_{b\in S_j}d\mathbf{R}_{jb})$.
\end{theorem}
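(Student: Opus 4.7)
The plan is to combine the master-equation approach used in Wang et al. with a mean-field argument over the multiple populations. Starting from the discrete-time update \eqref{eq:Ria}, I would first derive the effective drift of an agent's regret in $\Omega_i$. Treating $\mathbf{R}_{i_k}^{t+1} = \frac{1}{t}\sum_{\tau=1}^{t}\mathbf{r}_{i_k}^\tau$ as a running average, a one-line manipulation gives the recursion $\mathbf{R}_{i_k}^{t+1} - \mathbf{R}_{i_k}^{t} = \frac{1}{t}(\mathbf{r}_{i_k}^{t} - \mathbf{R}_{i_k}^{t})$, so the associated continuous-time ODE per agent is $\dot{\mathbf{R}}_{i_k} = (\mathbf{r}_{i_k} - \mathbf{R}_{i_k})/t$. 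This identifies the velocity field in regret space through which an individual agent is transported.

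Next I would promote this individual law to a distributional one. Because agents do not create or destroy probability mass as they move in $\Omega_i$, the density $p(\mathbf{R}_i,t)$ must satisfy the continuity (Liouville) equation $\partial_t p = -\nabla \cdot (p\, \mathbf{v})$ with velocity field $\mathbf{v} = (\mathbf{r}_{i} - \mathbf{R}_i)/t$. To close this equation, I must express $\mathbf{r}_i$ purely in terms of the current densities $\{p(\mathbf{R}_j,t)\}_{j \in V_i\cup\{i\}}$. Here the PNG structure enters: by \eqref{eq:Erika}, $\mathbf{r}_{i_k}^t(a) = (\mathbf{e}_{ia}-\mathbf{x}_{i_k})^\top \frac{1}{|V_i|}\sum_{j\in V_i}\mathbf{A}_{ij}\bar{\mathbf{x}}_j^t$, and as already noted in the preliminaries, in the continuum limit $\bar{\mathbf{x}}_j^t$ is an unbiased, zero-variance estimator of the expected policy of population $j$. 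This lets me replace $\bar{\mathbf{x}}_j^t$ by the softmax-image integral $\int \tfrac{\exp(\lambda \mathbf{R}_{jb})}{\sum_{b'}\exp(\lambda \mathbf{R}_{jb'})} p(\mathbf{R}_j,t)\,d\mathbf{R}_j$, and likewise average the prefactor $(\mathbf{e}_{ia}-\mathbf{x}_i)$ against $p(\mathbf{R}_i,t)$. Combining these, $\mathbf{r}_i$ becomes exactly the multi-integral expression $\bar{r}_{ia}$ in \eqref{eq:Eria_int}, and substituting back yields \eqref{eq:PDF_R1}.

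The main obstacle I anticipate is the second step, namely justifying that the empirical quantity $\bar{\mathbf{x}}_j^t = \frac{1}{|v_j|}\sum_{k'}\mathbf{x}_{j_{k'}}^t$ may be replaced by its ensemble average against $p(\mathbf{R}_j,t)$ uniformly in $t$, and that the mean instantaneous regret experienced by an agent at regret $\mathbf{R}_i$ can legitimately be decoupled across the neighbor set $V_i$ as a product measure $\prod_{j\in V_i} p(\mathbf{R}_j,t)$. The product form rests on the fact that under random matching in distinct two-player subgames the opponents drawn from different populations are conditionally independent given the population-level densities, and the continuum (law of large numbers) limit eliminates sampling noise so that only the first Kramers-Moyal coefficient survives; higher-order moments scale as $O(1/|v_j|) \to 0$, so the master equation reduces cleanly to the first-order transport equation without a diffusive correction. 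Once this closure is in place, the divergence form and the expression for $\bar{\mathbf{r}}_i$ follow by inspection, which is exactly \Cref{theorem:regretdynamics}.
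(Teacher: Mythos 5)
Your proposal follows essentially the same route as the paper: derive the per-agent drift $(\mathbf{r}_{i_k}-\mathbf{R}_{i_k})/t$ from the running-average recursion, close it in mean field by replacing empirical opponent averages with ensemble averages against the product of the neighbours' densities, and read off the transport equation for $p(\mathbf{R}_i,t)$. The only presentational difference is that you invoke the Liouville/continuity equation for a deterministic flow as a known fact, whereas the paper derives its weak form explicitly via a test function $\theta$, a Taylor expansion in the time increment $\delta$, and integration by parts with vanishing boundary mass --- the second-order term there is killed by $\delta\to 0$ rather than by the $O(1/\vert v_j\vert)$ law-of-large-numbers scaling you cite, but both observations are standard and the conclusion is the same.
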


See Appendix A for a detailed certification. If we let $\rho_i= p(\mathbf{R}_i,t)$ and $\vec{v}_i=\left(\bar{\mathbf{r}}_i-\bar{\mathbf{R}}_i\right)/t$, then we can recover \Cref{eq:PDF_R1} as a \emph{continuity equation}
$\partial_t \rho_i + \nabla\cdot(\rho_i\mathbf{v}_{i})= 0$.
The continuity equation is a widely used equation in physics and mathematics that describes systems where a conserved physical quantity (such as mass, energy, momentum, or charge) remains constant. Any changes in the quantity occur solely due to inflow or outflow through a boundary, ensuring that the quantity neither vanishes nor appears spontaneously.
Thus, this theorem implies that the continuity equation, which typically describes the transport of conserved quantity in a physical system, can also provide an accurate description of the learning dynamics under our concerned setting.
We remark that \Cref{theorem:regretdynamics} generalizes that of Wang et al. \cite{2022wangModellingDynamicsRegret} to allow for multiple populations and graph structure; in their work, they consider only one population in which agents are randomly paired up to play 2-player symmetric games.

However, our continuity \Cref{eq:PDF_R1} is unfortunately analytically unsolvable. In fact, only very few cases, e.g. one-dimensional linear continuity equations, allow for analytic solutions. As a remedy, we employ moment closure techniques to address the effects of higher-order moments by approximating them through the coefficients of the Taylor expansion, as detailed in the Appendix. By truncating the expansion of a certain order, we are able to both capture the underlying system behavior and alleviate the computational difficulties associated with the direct inclusion of higher-order moments. Here, we turn our attention to analyzing the moments (e.g. the mean and the variance) of the probability distribution $p(\mathbf{R}_i, t)$ based on the continuity equation.
In the following theorem, we show that for each population $i\in V$, the time evolution of the mean regret $\bar{\mathbf{R}}_i$ (the first moment) can be described by an analytically more amenable ODE.

\begin{theorem}[Mean Regret Dynamics]\label{theorem:heto_dERdt}
    For each population $i\in V$, the continuous-time dynamic of the mean regret $\bar{\mathbf{R}}_i$ can be described by
    \begin{equation}\label{eq:heto_dERdt}
        \begin{split}
            \frac{d\bar{\mathbf{R}}_i}{dt}= & \frac{\bar{\mathbf{r}}_{i}-\bar{\mathbf{R}}_{i}}{t}\approx\frac{1}{t}[f(\{\bar{\mathbf{R}}_j\}_{j\in V_i})-\bar{\mathbf{R}}_{i}] \\
                                            & +\frac{1}{2t}[f''(\{\bar{\mathbf{R}}_j\}_{j\in V_i})\operatorname{Var}(\{\mathbf{R}_j\}_{j\in V_i})],
        \end{split}
    \end{equation}
    where $\operatorname{Var}(\{\mathbf{R}_j\}_{j\in V_i})$ is the variance of regrets in population $j\in V_i$, and $f(\{\bar{\mathbf{R}}_j\}_{j\in V_i})=[\mathbf{e}_{ia}-\mathbf{x}_i]^\top\frac{1}{\vert V_i\vert}\sum_{j\in V_i}\mathbf{A}_{ij}\bar{\mathbf{x}}_j$ is the instantaneous regret of population (\Cref{eq:Eria_int} implies the the integral of each dimension).
\end{theorem}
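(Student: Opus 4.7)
The plan is to derive the equality $\frac{d\bar{\mathbf{R}}_i}{dt}=\frac{\bar{\mathbf{r}}_i-\bar{\mathbf{R}}_i}{t}$ directly from the continuity equation \eqref{eq:PDF_R1} of \Cref{theorem:regretdynamics}, and then obtain the approximate right-hand side by a second-order Taylor expansion of the instantaneous-regret functional around the mean regrets of the neighbouring populations, which is the moment-closure step alluded to in the section preamble.

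First I would start from the definition $\bar{\mathbf{R}}_i(t)=\int_{\Omega_i}\mathbf{R}_i\, p(\mathbf{R}_i,t)\, d\mathbf{R}_i$ and differentiate under the integral. Substituting the PDE from \Cref{theorem:regretdynamics} gives
\begin{equation*}
\frac{d\bar{\mathbf{R}}_i}{dt}=-\int_{\Omega_i}\mathbf{R}_i\,\nabla\!\cdot\!\Bigl[p(\mathbf{R}_i,t)\,\tfrac{\bar{\mathbf{r}}_i-\mathbf{R}_i}{t}\Bigr]\,d\mathbf{R}_i.
\end{equation*}
Componentwise integration by parts (i.e. the divergence theorem), together with the standard assumption that $p(\mathbf{R}_i,t)$ decays sufficiently fast at infinity so that the boundary term vanishes, converts this into $\frac{1}{t}\int(\bar{\mathbf{r}}_i-\mathbf{R}_i)\,p(\mathbf{R}_i,t)\,d\mathbf{R}_i$, which collapses to the announced expression $(\bar{\mathbf{r}}_i-\bar{\mathbf{R}}_i)/t$ since $\bar{\mathbf{r}}_i$ is already an expectation independent of $\mathbf{R}_i$ once the definition \eqref{eq:Eria_int} is in place.

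Next, to recover the approximation, I would expand the integrand of $\bar{r}_{ia}$ in \eqref{eq:Eria_int} via a second-order Taylor series in the neighbouring regret vectors $\{\mathbf{R}_j\}_{j\in V_i}$ about their population means $\{\bar{\mathbf{R}}_j\}_{j\in V_i}$. Writing the integrand as $f(\{\mathbf{R}_j\}_{j\in V_i})$ (the dependence on $\mathbf{R}_i$ being absorbed into $\mathbf{x}_i$, which is independent across populations under the mean-field factorisation implicit in \eqref{eq:Eria_int}), the zeroth-order term integrates to $f(\{\bar{\mathbf{R}}_j\}_{j\in V_i})$, the first-order term vanishes because the deviations $\mathbf{R}_j-\bar{\mathbf{R}}_j$ have zero mean, and the second-order term contributes exactly $\tfrac{1}{2}f''(\{\bar{\mathbf{R}}_j\}_{j\in V_i})\operatorname{Var}(\{\mathbf{R}_j\}_{j\in V_i})$ upon integration. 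Dividing by $t$ and adding the subtracted $\bar{\mathbf{R}}_i/t$ yields \eqref{eq:heto_dERdt}.

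The main obstacle, in my view, is not the integration-by-parts step but giving a clean interpretation (and justification of truncation) for the second-order term. Because $f$ depends jointly on all neighbouring regret vectors and because the softmax-induced policies $\bar{\mathbf{x}}_j$ enter nonlinearly, the symbol $f''(\{\bar{\mathbf{R}}_j\}_{j\in V_i})\operatorname{Var}(\{\mathbf{R}_j\}_{j\in V_i})$ really denotes a contraction of the Hessian tensor of $f$ with the (componentwise) covariance structure of the regrets of each neighbour, under the independence assumption that $p(\mathbf{R}_i,t)\prod_{j\in V_i}p(\mathbf{R}_j,t)$ used in \eqref{eq:Eria_int} already imposes. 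I would therefore make explicit (i) that cross-population covariances drop out by the product-form of the joint density, and (ii) that truncation beyond the second moment is the moment-closure assumption needed to obtain a tractable ODE; the higher-order cumulants, which would otherwise couple into the dynamics, are assumed negligible, consistently with the Taylor-expansion approximation noted in the text preceding the theorem.
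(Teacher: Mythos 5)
Your proposal follows essentially the same route as the paper's proof: differentiate the mean under the integral, substitute the continuity equation, integrate by parts with vanishing boundary mass to obtain the exact ODE $\frac{d\bar{\mathbf{R}}_i}{dt}=(\bar{\mathbf{r}}_i-\bar{\mathbf{R}}_i)/t$, and then Taylor-expand the integrand of $\bar{r}_{ia}$ about the mean regrets, using the product form of the joint density to kill cross-population covariances and truncating at second order as the moment-closure step. The only substantive difference is that you expand only in the neighbours' regrets and claim the $\mathbf{R}_i$-dependence is ``absorbed'' into $\mathbf{x}_i$, whereas the integrand genuinely depends on $\mathbf{R}_i$ through the softmax, so the expansion should also carry population $i$'s own diagonal Hessian/variance contribution --- exactly what the paper does by summing over $h\in V_{ij}=\{i\}\cup V_i$.
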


Intuitively, for each population, the mean regret dynamics \eqref{eq:heto_dERdt} represents its trend of change in regrets.
Thus, from \Cref{eq:heto_dERdt}, it is clear that such a trend of change in regrets is affected not only by the mean regret itself but also by the variance of regrets.
This finding is non-trivial, as the mean dynamics is usually self-organizing (without the dependence on other moments) in most previous studies of population dynamics \cite{sandholm2010population}.

It is then natural to ask: how does the variance of regrets (the second moment) change over time?
We show in the following theorem that the dynamics of the variance is also governed by an ODE.
More interestingly, based on the ODE, we uncover that the variance of regrets asymptotically tends to zero.
Put differently, as time goes to infinity, all the agents in a population eventually converge to the same regret and, consequently, the same policy.

\begin{theorem}[Decay of the Variance]\label{theorem:VarDecay}
    For each population $i\in V$,
    the dynamics of the variance of regrets $\operatorname{Var}(\mathbf{R}_i)$  is governed by
    \begin{equation}
        \frac{d\operatorname{Var}(\mathbf{R}_i)}{dt}=-\frac{2\operatorname{Var}(\mathbf{R}_i)}{t},
    \end{equation}
    such that at given time $t$, $\operatorname{Var}(\mathbf{R}_i)=\frac{\sigma^2(\mathbf{R}_i)}{t^2}$, where $\sigma^2(\mathbf{R}_i)$ is the initial variance of $\mathbf{R}_i$. Thus, over time, the variance $\operatorname{Var}(\mathbf{R}_i)$ will decay to zero.
\end{theorem}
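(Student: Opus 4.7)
The plan is to derive the variance ODE directly from the continuity equation in \Cref{theorem:regretdynamics} together with the mean-regret ODE from \Cref{theorem:heto_dERdt}, and then integrate. For notational simplicity I would work component-wise: fix an action $a \in S_i$ and track $\operatorname{Var}(R_{ia}) = E[R_{ia}^2] - \bar{R}_{ia}^2$, since the vector-valued variance decomposes coordinate-by-coordinate.

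First I would compute $\frac{d}{dt} E[R_{ia}^2]$ by differentiating under the integral and substituting the continuity equation:
\begin{equation*}
\frac{d}{dt}\int R_{ia}^2\, p(\mathbf{R}_i,t)\, d\mathbf{R}_i
= -\int R_{ia}^2\, \nabla\!\cdot\!\Bigl[p(\mathbf{R}_i,t)\tfrac{\bar{\mathbf{r}}_i-\mathbf{R}_i}{t}\Bigr]\, d\mathbf{R}_i .
\end{equation*}
An integration by parts (with vanishing boundary terms, since $p(\mathbf{R}_i,t)$ is a probability density with sufficiently fast decay at infinity) turns the divergence into the gradient $\nabla R_{ia}^2 = 2 R_{ia}\,\mathbf{e}_{ia}$, yielding
\begin{equation*}
\frac{dE[R_{ia}^2]}{dt} = \frac{2}{t}\bigl(\bar{r}_{ia}\,\bar{R}_{ia} - E[R_{ia}^2]\bigr).
\end{equation*}
The critical observation that makes this step work is that $\bar{\mathbf{r}}_i$ from \eqref{eq:Eria_int} is a population-level quantity, independent of the individual regret $\mathbf{R}_i$ that we integrate against, so it factors out of the integral cleanly.

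Second I would invoke \Cref{theorem:heto_dERdt}, which gives $\frac{d\bar{R}_{ia}}{dt} = (\bar{r}_{ia}-\bar{R}_{ia})/t$, and combine it with the previous expression:
\begin{equation*}
\frac{d\operatorname{Var}(R_{ia})}{dt}
= \frac{dE[R_{ia}^2]}{dt} - 2\bar{R}_{ia}\frac{d\bar{R}_{ia}}{dt}
= \frac{2}{t}\bigl(\bar{r}_{ia}\bar{R}_{ia} - E[R_{ia}^2] - \bar{r}_{ia}\bar{R}_{ia} + \bar{R}_{ia}^2\bigr)
= -\frac{2\operatorname{Var}(R_{ia})}{t}.
\end{equation*}
Here the mean-instantaneous-regret terms $\bar{r}_{ia}\bar{R}_{ia}$ cancel exactly, which is the whole reason the evolution of the variance closes in terms of itself without coupling to higher moments or to $\bar{\mathbf{r}}_i$. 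Summing across components recovers the vector statement in the theorem.

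Finally I would solve the scalar linear ODE $\dot V = -2V/t$ by separation of variables to obtain $V(t) = C/t^2$, and fix the constant $C = \sigma^2(\mathbf{R}_i)$ from the initial condition, giving $\operatorname{Var}(\mathbf{R}_i) = \sigma^2(\mathbf{R}_i)/t^2 \to 0$. The main obstacle I anticipate is the careful justification of the boundary terms in the integration by parts and the interchange of derivative and integral; I would handle this by assuming standard integrability/decay hypotheses on $p(\mathbf{R}_i,t)$ (finite second moment, vanishing flux at infinity), which are natural in this setting and parallel the implicit regularity assumptions already used in \Cref{theorem:regretdynamics,theorem:heto_dERdt}.
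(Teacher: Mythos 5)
Your proposal is correct and follows essentially the same route as the paper's own proof: differentiate $E[R_{ia}^2]$ under the integral, substitute the continuity equation, integrate by parts with vanishing boundary terms to get $\frac{d}{dt}E[R_{ia}^2]=\frac{2}{t}(\bar{r}_{ia}\bar{R}_{ia}-E[R_{ia}^2])$, subtract $2\bar{R}_{ia}\,d\bar{R}_{ia}/dt$ so the $\bar{r}_{ia}\bar{R}_{ia}$ terms cancel, and integrate the resulting separable ODE to obtain $\operatorname{Var}(R_{ia})=\sigma^2(R_{ia})/t^2$. The only cosmetic difference is that the paper carries the cross-terms $\partial_{R_{is}}(\bar{r}_{is}-R_{is})$ explicitly and lets them cancel, whereas you discharge them up front by noting that $\bar{\mathbf{r}}_i$ is a population-level constant with respect to the integration variable; both lead to the same identity.
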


Note that the proof of the above results is independent of the action selection function.
Hence, \Cref{theorem:VarDecay} implies that no matter what action selection function agents adopt, all the agents in a population will eventually converge to the same policy, regardless of the initial distribution of regrets, the specific two-player sub-games played by agents, or the number of populations and strategies.
We remark that in weakly related work, vanishing heterogeneity is also proved in the studies of smooth fictitious play \cite{fudenberg2011heterogeneous,huHeterogeneousBeliefsMultiPopulation2023}; however, heterogeneity generally remains for the use of Q-learning and Cross learning \cite{aamashu2022,leungModellingDynamicsMultiAgent2022}.

The core insight lies in population-level dynamics: agents share identical payoffs and observations, interacting with the same neighbors. Despite initial heterogeneity, all regrets update based on the same feedback—opponents' average performance—causing regrets to evolve in parallel, align with the population average, and pull outliers toward the mean, leading to vanishing heterogeneity.
It is noteworthy that our theoretical results in \Cref{theorem:regretdynamics} to \Cref{theorem:VarDecay} are universal and hold under the standard assumption of differentiable regret learning dynamics—a common and well-behaved condition for probability distributions. While our current analysis is centered on normal-form games, our framework is readily applicable to other game types, such as extensive-form games, provided they involve two-player subgame scenarios. This broad applicability stems from the fact that every extensive-form game can be uniquely mapped to its corresponding normal-form representation through equivalent payoff matrices ~\cite{cressman2003evolutionary}.

\begin{table}[hb]
    \centering
    \caption{Payoff bimatrices of six different games.}
    \label{tab:payoff}
    \vspace{-0.2cm}
    \subfloat[Presidential election]{
        \begin{minipage}[t]{0.5\linewidth}
            \centering
            \begin{tabular}{lcc}
                \toprule
                  & M      & T      \\
                \midrule
                E & (3,-3) & (-1,1) \\
                S & (-2,2) & (1,-1) \\
                \bottomrule
            \end{tabular}
        \end{minipage}
    }
    \hspace{-1cm}
    \subfloat[Rock-paper-scissors]{
        \begin{minipage}[t]{0.5\linewidth}
            \centering
            \resizebox{0.75\linewidth}{!}{  
                \begin{tabular}{lccc}
                    \toprule
                      & R      & P      & S      \\
                    \midrule
                    R & (0,0)  & (1,-1) & (-1,1) \\
                    P & (-1,1) & (0,0)  & (1,-1) \\
                    S & (1,-1) & (-1,1) & (0,0)  \\
                    \bottomrule
                \end{tabular}
            }
        \end{minipage}
    }
    \vspace{-0.2cm}
    \subfloat[Asymmetric matching pennies]{
        \begin{minipage}[t]{0.5\linewidth}
            \centering
            \begin{tabular}{lcc}
                \toprule
                  & H      & T      \\
                \midrule
                H & (2,-4) & (-2,4) \\
                T & (0,0)  & (2,-4) \\
                \bottomrule
            \end{tabular}
        \end{minipage}
    }
    \hspace{-1cm}
    \subfloat[Prisoner's dilemma]{
        \begin{minipage}[t]{0.5\linewidth}
            \centering
            \begin{tabular}{lcc}
                \toprule
                  & C     & D     \\
                \midrule
                C & (6,6) & (2,8) \\
                D & (8,2) & (2,2) \\
                \bottomrule
            \end{tabular}
        \end{minipage}
    }
    \vspace{-0.1cm}
    \subfloat[Stag hunt]{
        \begin{minipage}[t]{0.5\linewidth}
            \centering
            \begin{tabular}{lcc}
                \toprule
                  & S       & H     \\
                \midrule
                S & (10,10) & (1,8) \\
                H & (8,1)   & (5,5) \\
                \bottomrule
            \end{tabular}
        \end{minipage}
    }
    \hspace{-1cm}
    \subfloat[Battle of the sexes]{
        \begin{minipage}[t]{0.5\linewidth}
            \centering
            \begin{tabular}{lcc}
                \toprule
                  & F      & B      \\
                \midrule
                F & (10,5) & (0,0)  \\
                B & (0,0)  & (5,10) \\
                \bottomrule
            \end{tabular}
        \end{minipage}
    }
\end{table}
\begin{figure*}[htbp]
    \centering
    \includegraphics[width=\textwidth]{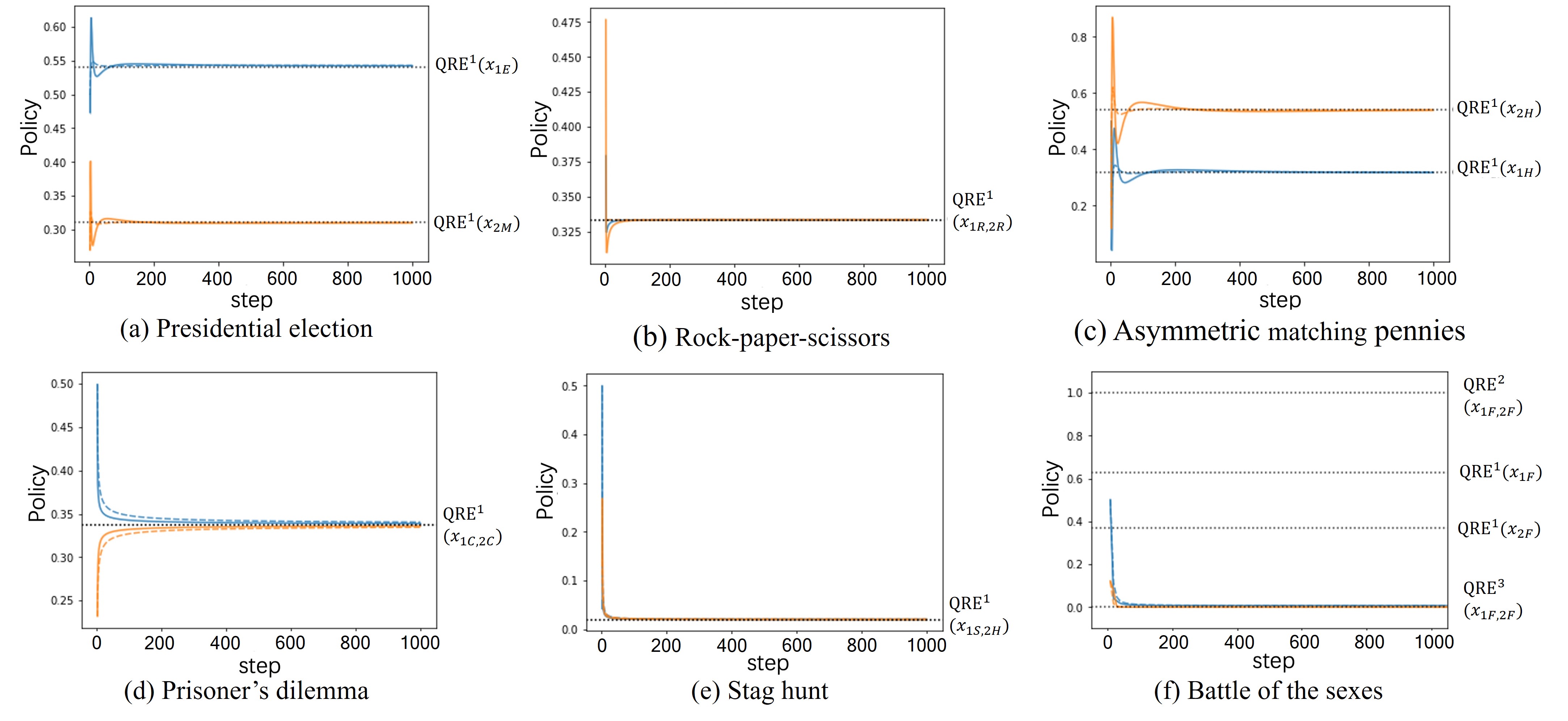}
    \caption{Evolution of the expected policy (dashed lines) derived from our model and that of the mean policy (solid lines) in agent-based simulations of 2 population network games. All games eventually converge to the theoretical QRE (black dotted line).}
    \label{fig:sim_2population}
\end{figure*}

\begin{figure*}[h]
    \centering
    \subfloat[PD, Norm($\sigma=0.05$)\label{fig:ws_pd_var0.05}]{
        \includegraphics[width=0.23\textwidth]{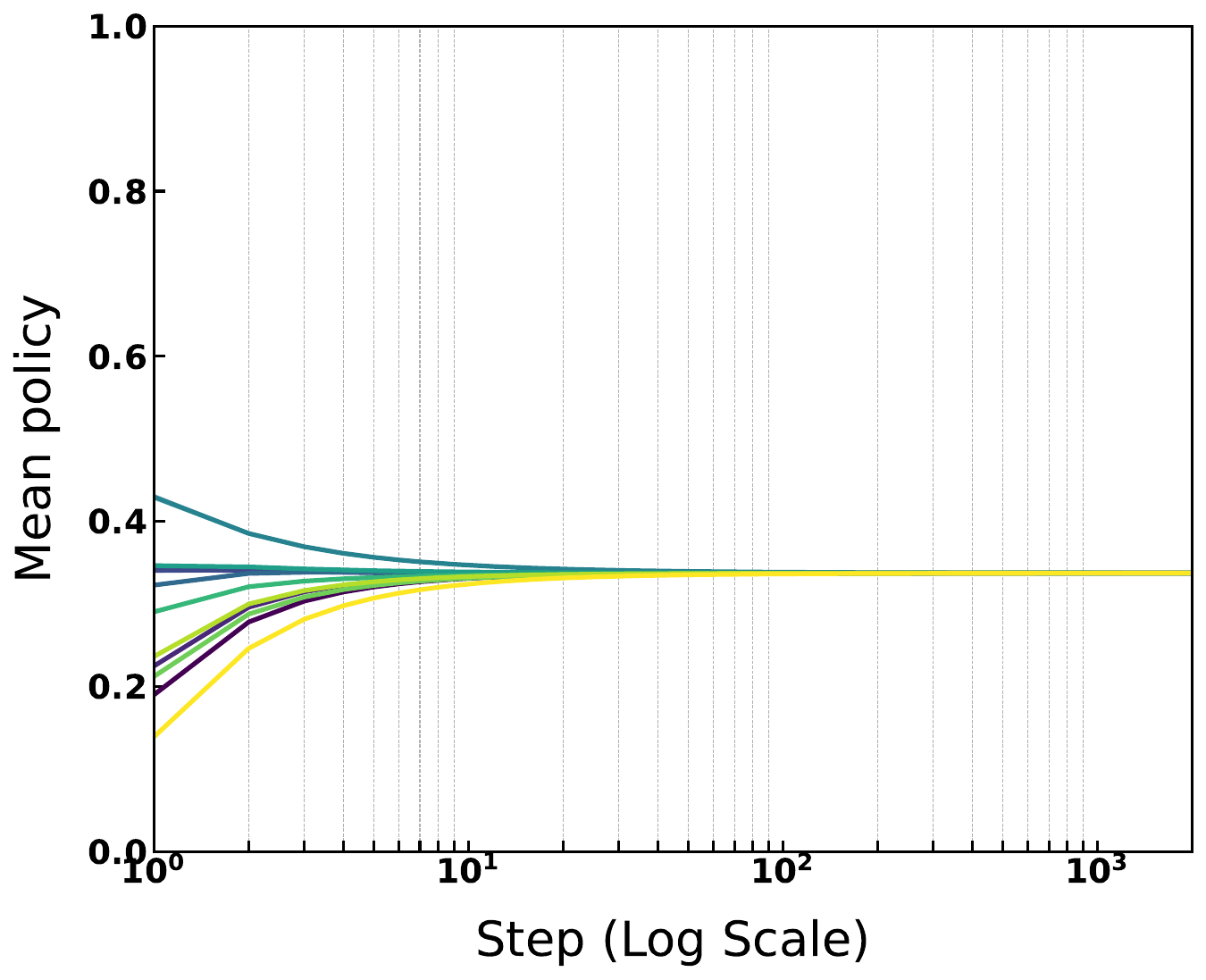}
    }
    \subfloat[PD, Norm($\sigma=0.1$)\label{fig:ws_pd_var0.1}]{
        \includegraphics[width=0.23\textwidth]{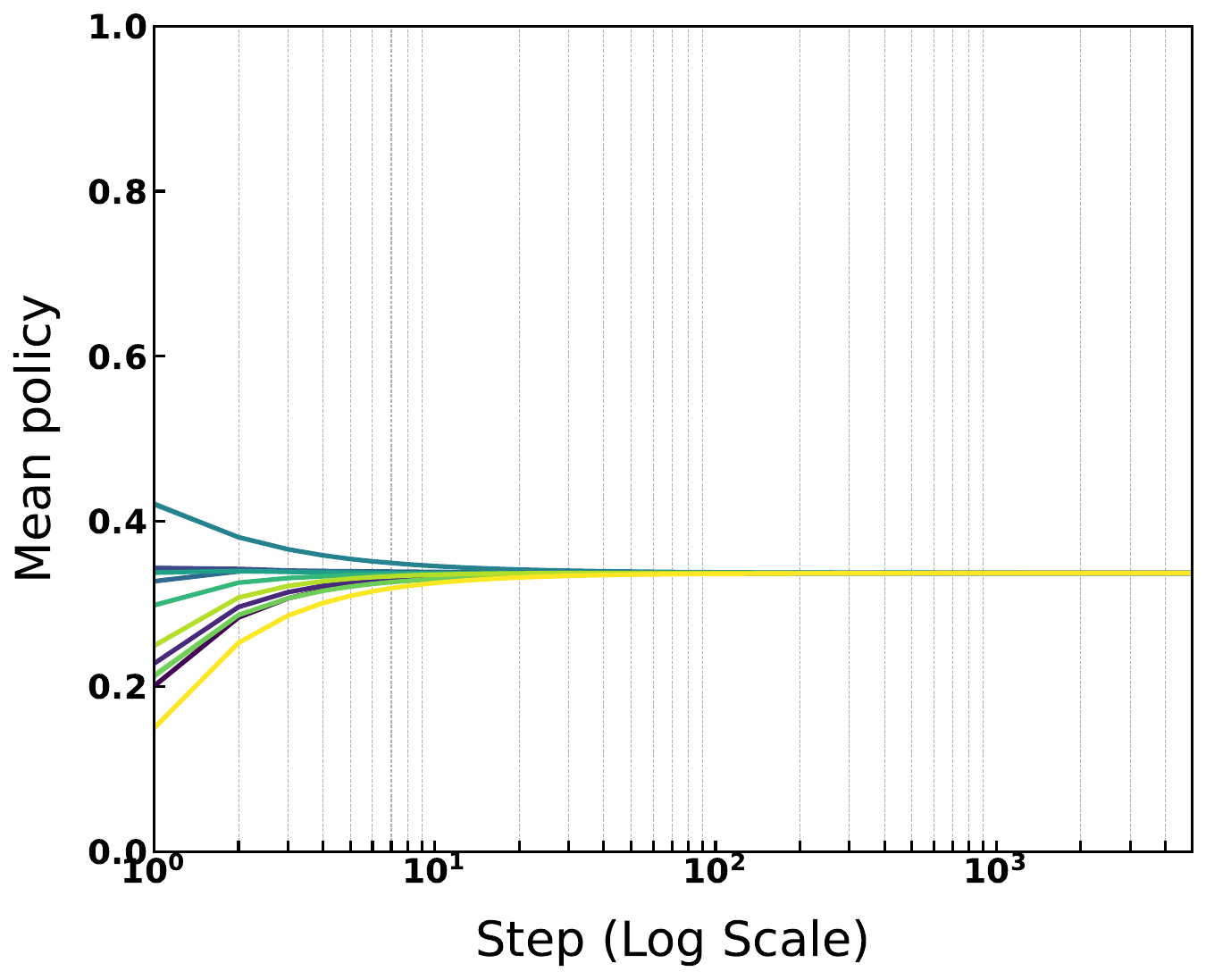}
    }
    \subfloat[RPS, Norm($\sigma=0.05$)\label{fig:ws_rps_var0.05}]{
        \includegraphics[width=0.23\textwidth]{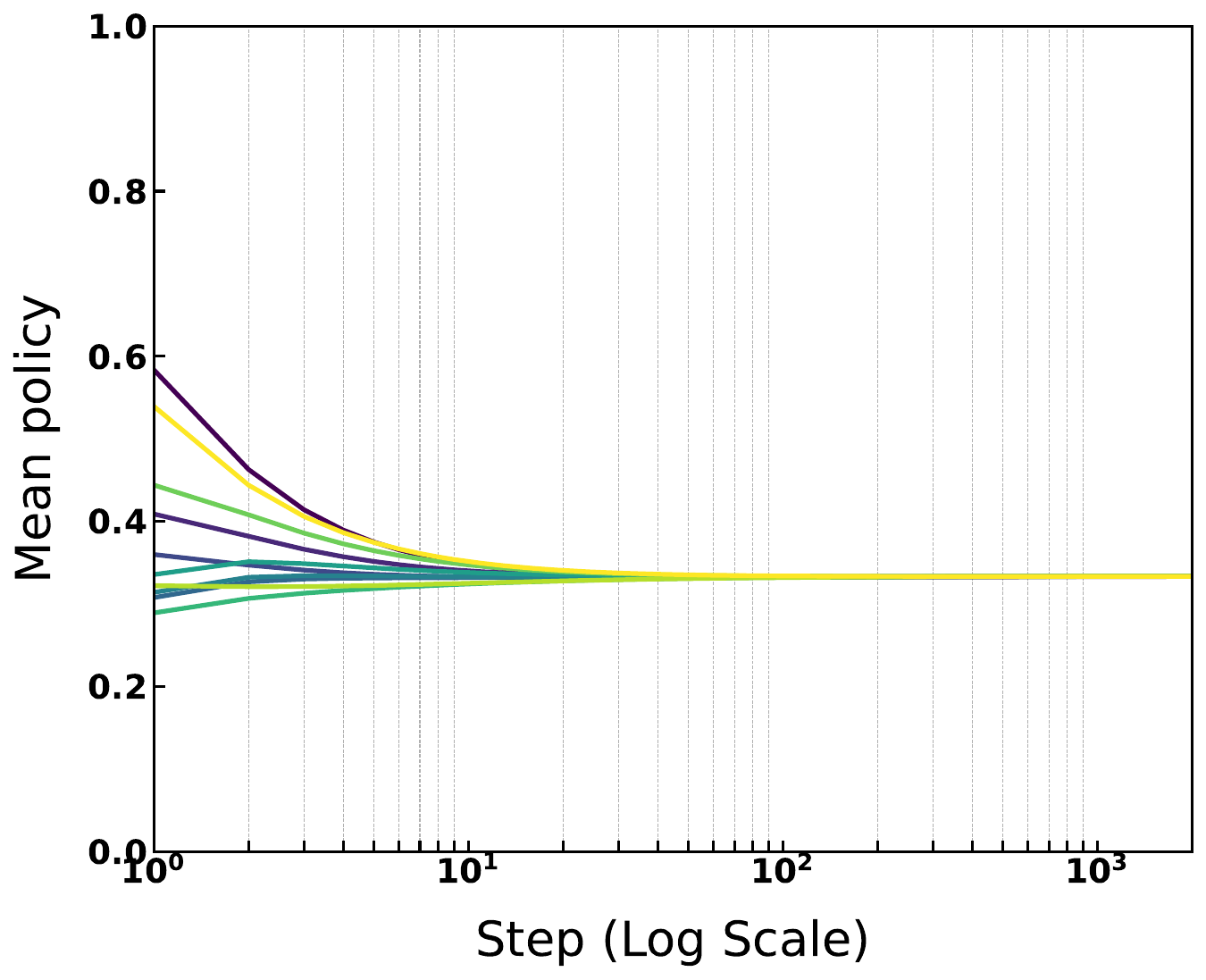}
    }
    \subfloat[RPS, Norm($\sigma=0.1$)\label{fig:ws_rps_var0.1}]{
        \includegraphics[width=0.23\textwidth]{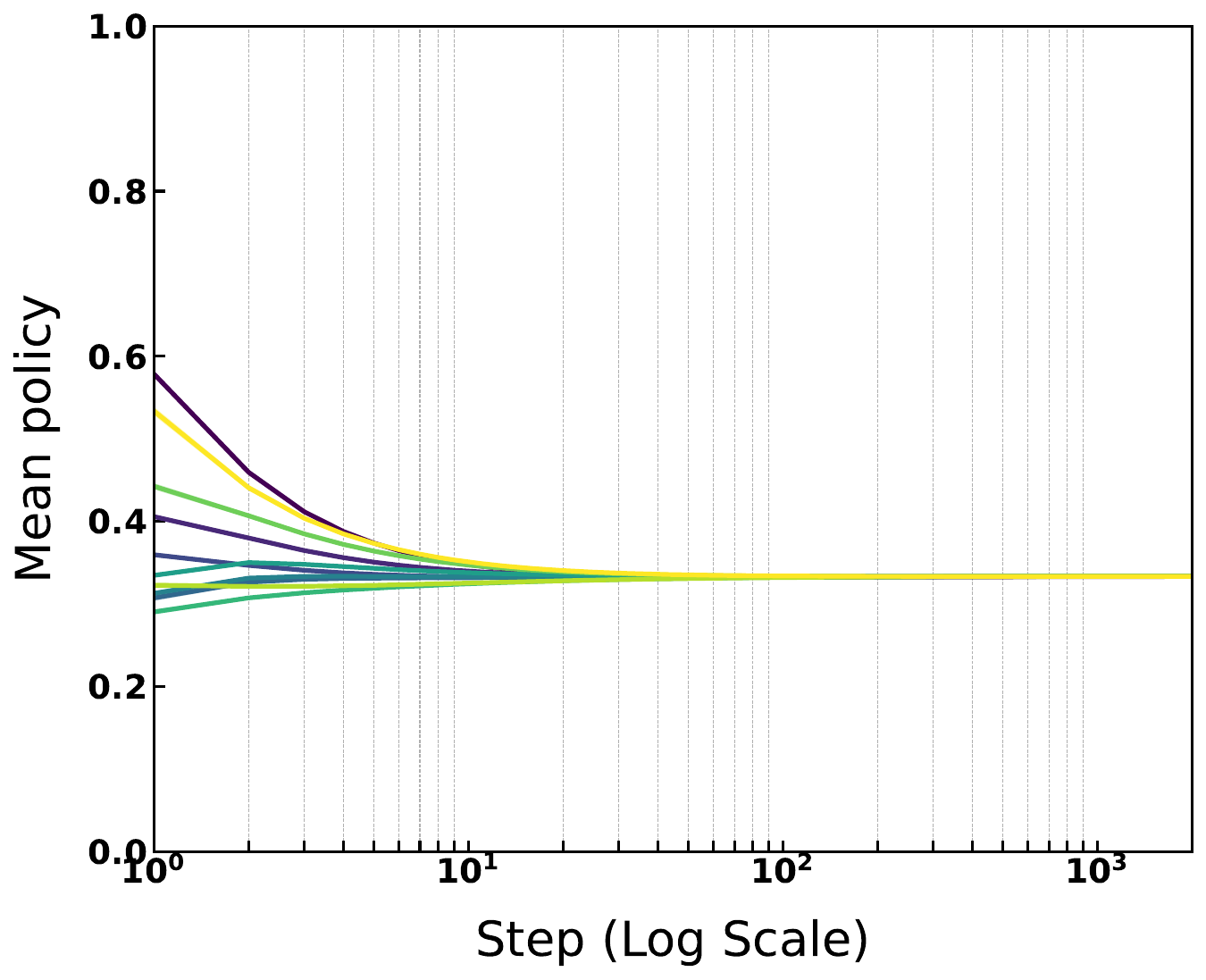}
    }
    \vfill
    \subfloat[PD, Norm($\sigma=0.05$)\label{fig:ws_pd_var0.05_vr}]{
        \includegraphics[width=0.23\textwidth]{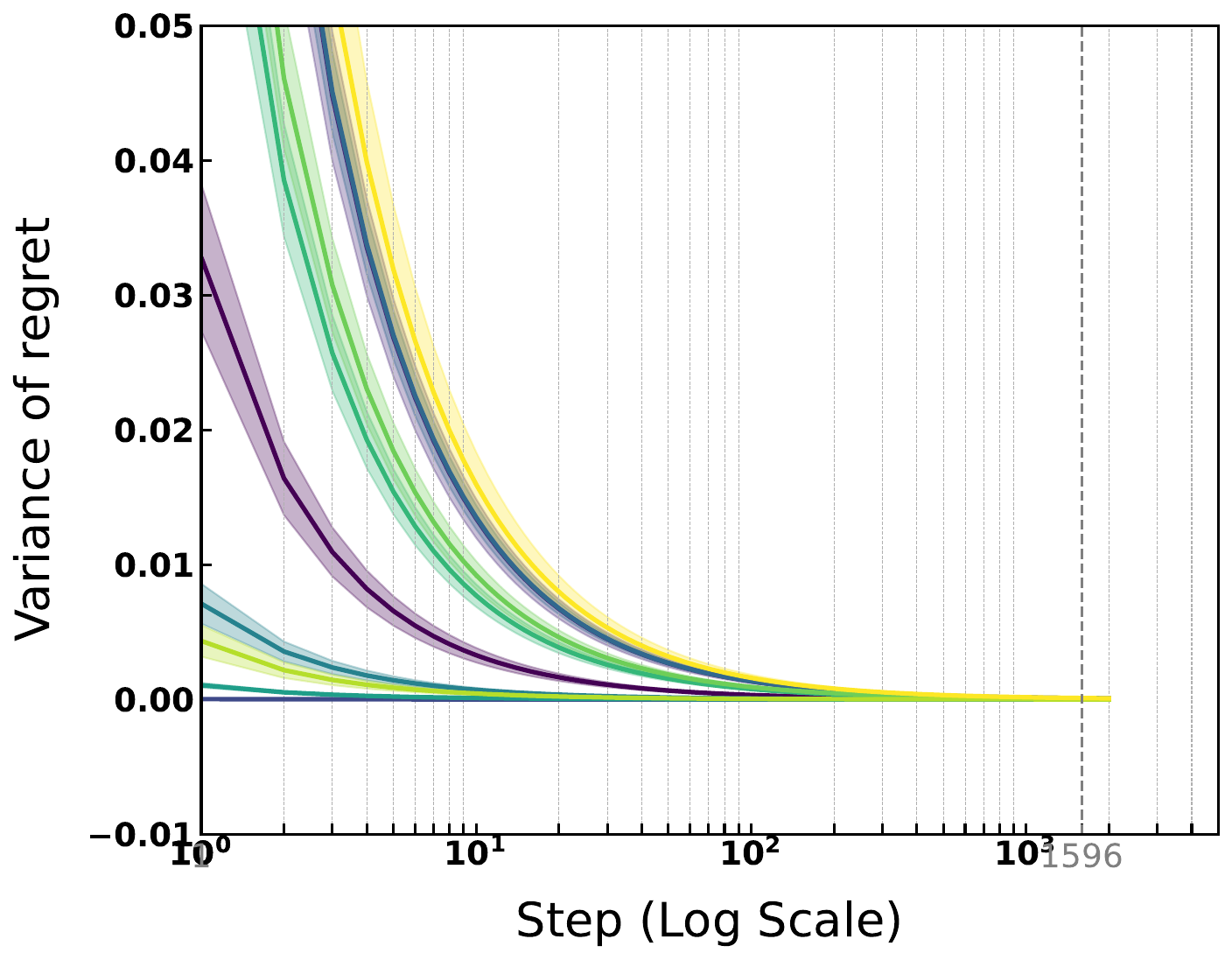}
    }
    \subfloat[PD, Norm($\sigma=0.1$)\label{fig:ws_pd_var0.1_vr}]{
        \includegraphics[width=0.23\textwidth]{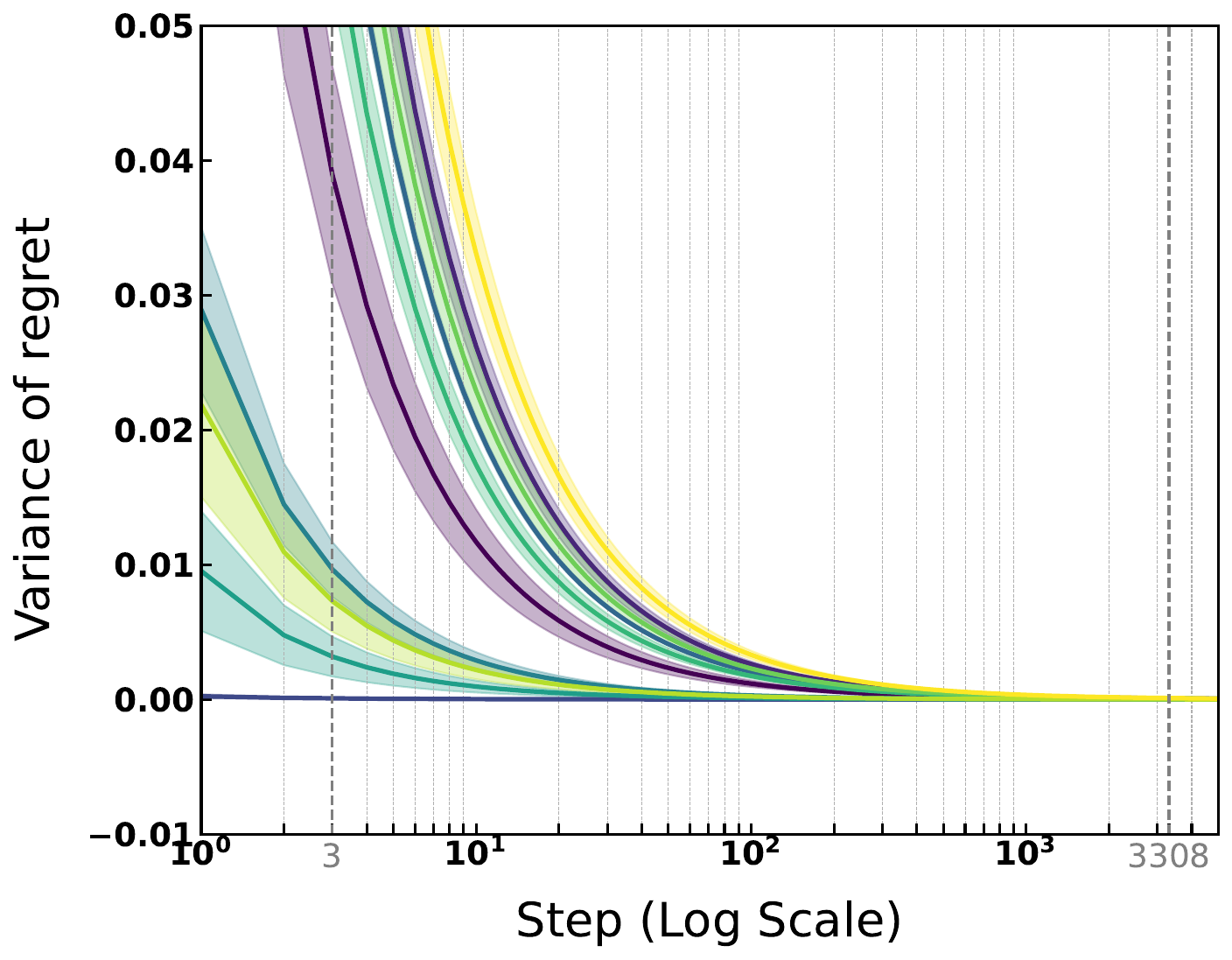}
    }
    \subfloat[RPS, Norm($\sigma=0.05$)\label{fig:ws_rps_var0.05_vr}]{
        \includegraphics[width=0.23\textwidth]{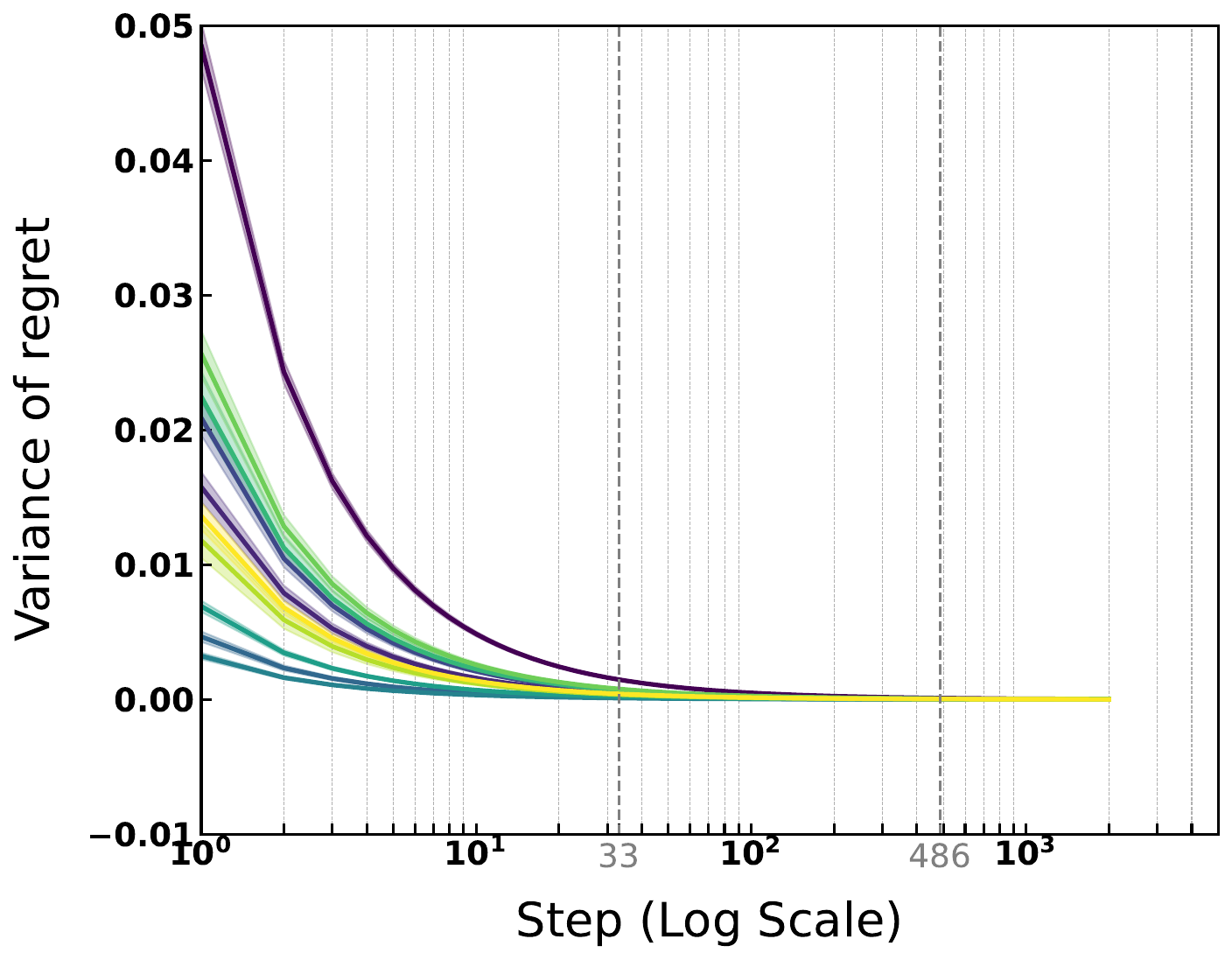}
    }
    \subfloat[RPS, Norm($\sigma=0.1$)\label{fig:ws_rps_var0.1_vr}]{
        \includegraphics[width=0.23\textwidth]{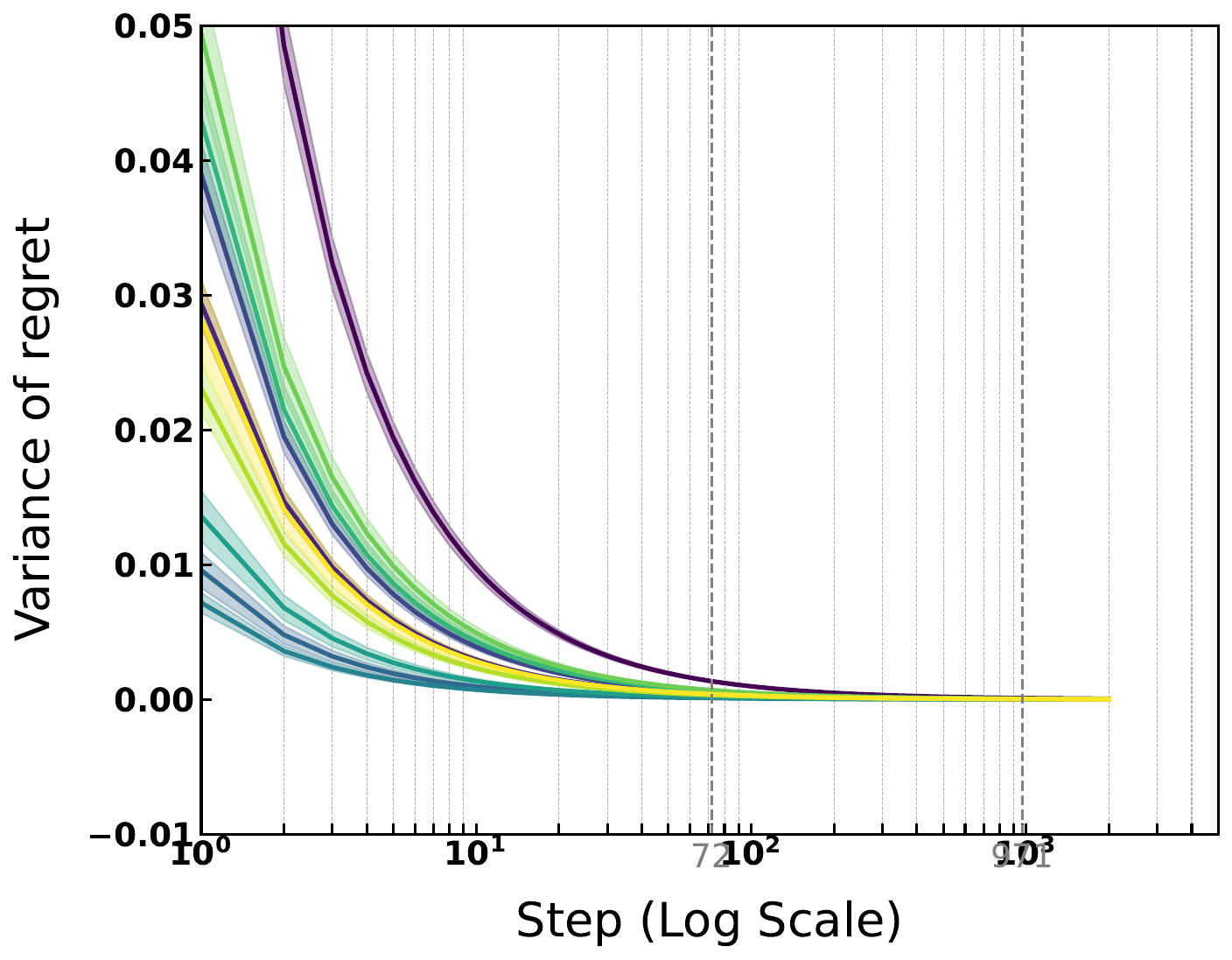}
    }
    \caption{Evolution of mean policy and regret variance in agent-based simulations on Watts-Strogatz networks for the Prisoner's Dilemma and Rock-Paper-Scissors games. Initial regrets are sampled from normal distributions with varying means, while initial normalized variances ($\sigma$) are identical. Each line represents a population, where mean policies (a-d) converge to the QRE and regret variances (e-h) decrease over time. Dashed lines mark the maximum steps across 10 simulations for heterogeneous populations to reach homogeneity in simulations. Notably, systems with higher initial regret variance consistently require more steps to achieve homogeneity.}
    \label{fig:multipopulation}
    \vspace{-0.1in}
\end{figure*}

\begin{figure*}[!htbp]
    \centering
    \subfloat[BoS step=1\label{fig:bos_s1}]{
        \includegraphics[width=0.23\textwidth, height=0.23\textwidth]{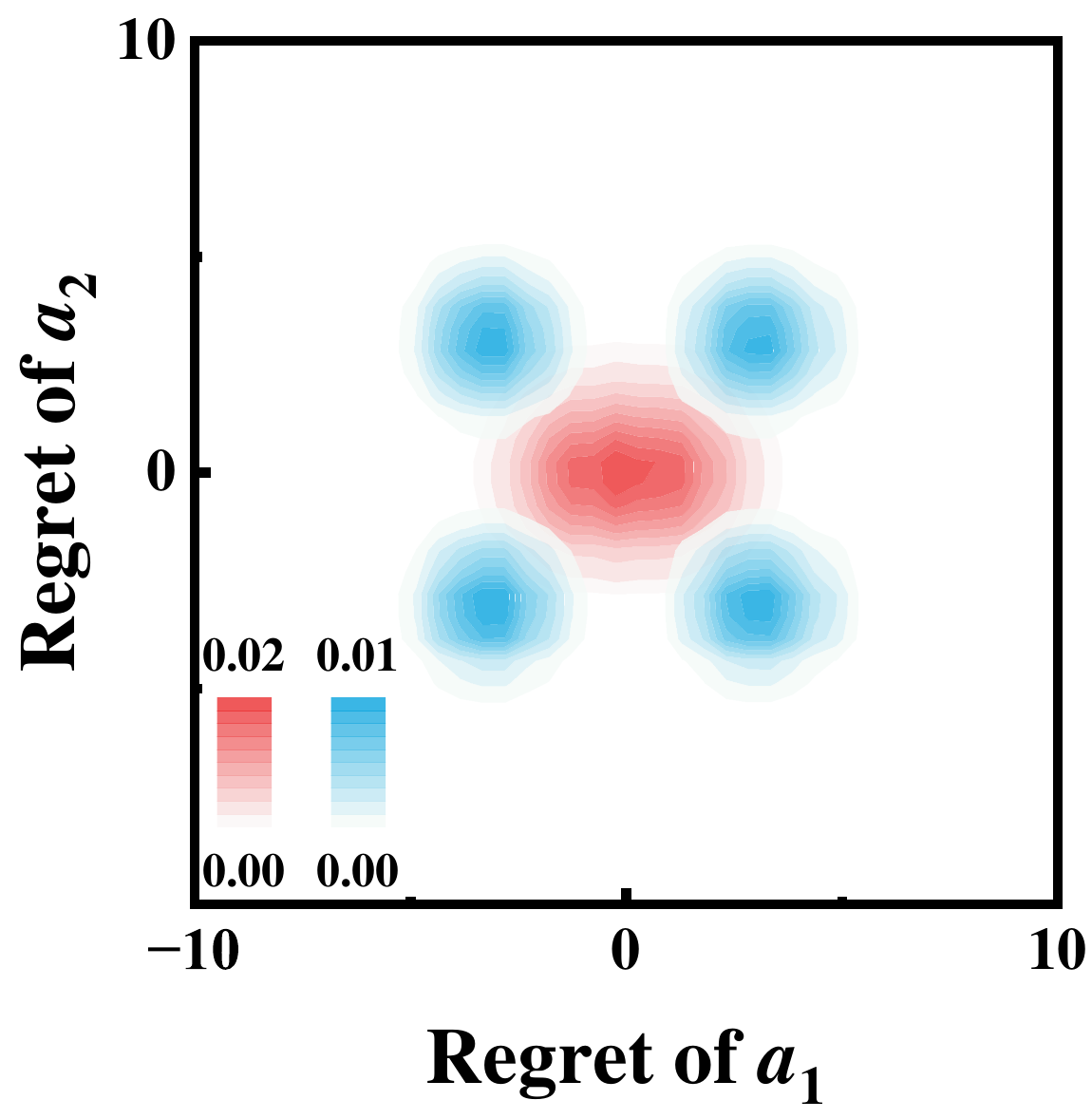}
    }
    \subfloat[BoS step=2\label{fig:bos_s2}]{
        \includegraphics[width=0.23\textwidth, height=0.23\textwidth]{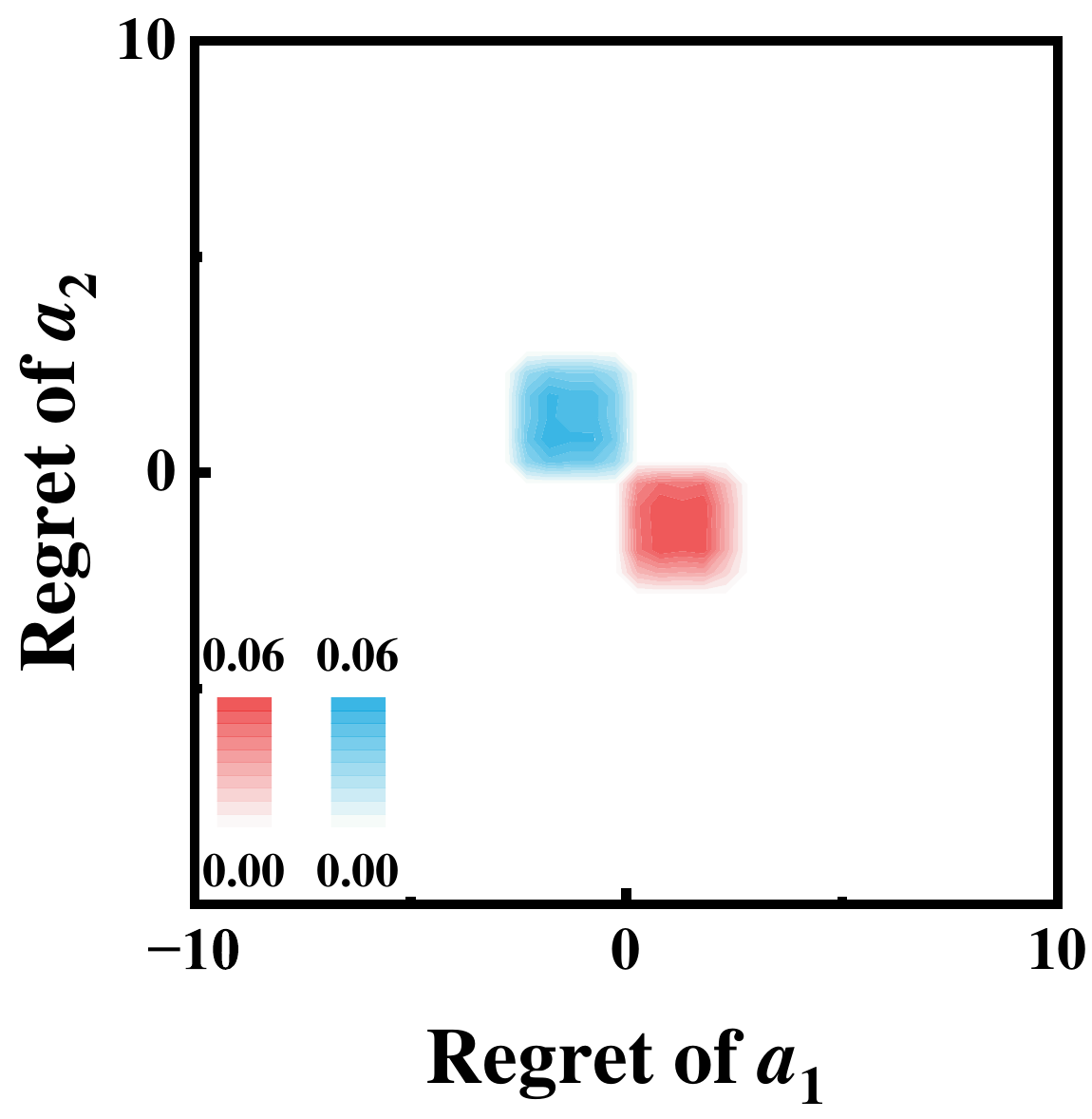}
    }
    \subfloat[BoS step=4\label{fig:bos_s4}]{
        \includegraphics[width=0.23\textwidth, height=0.23\textwidth]{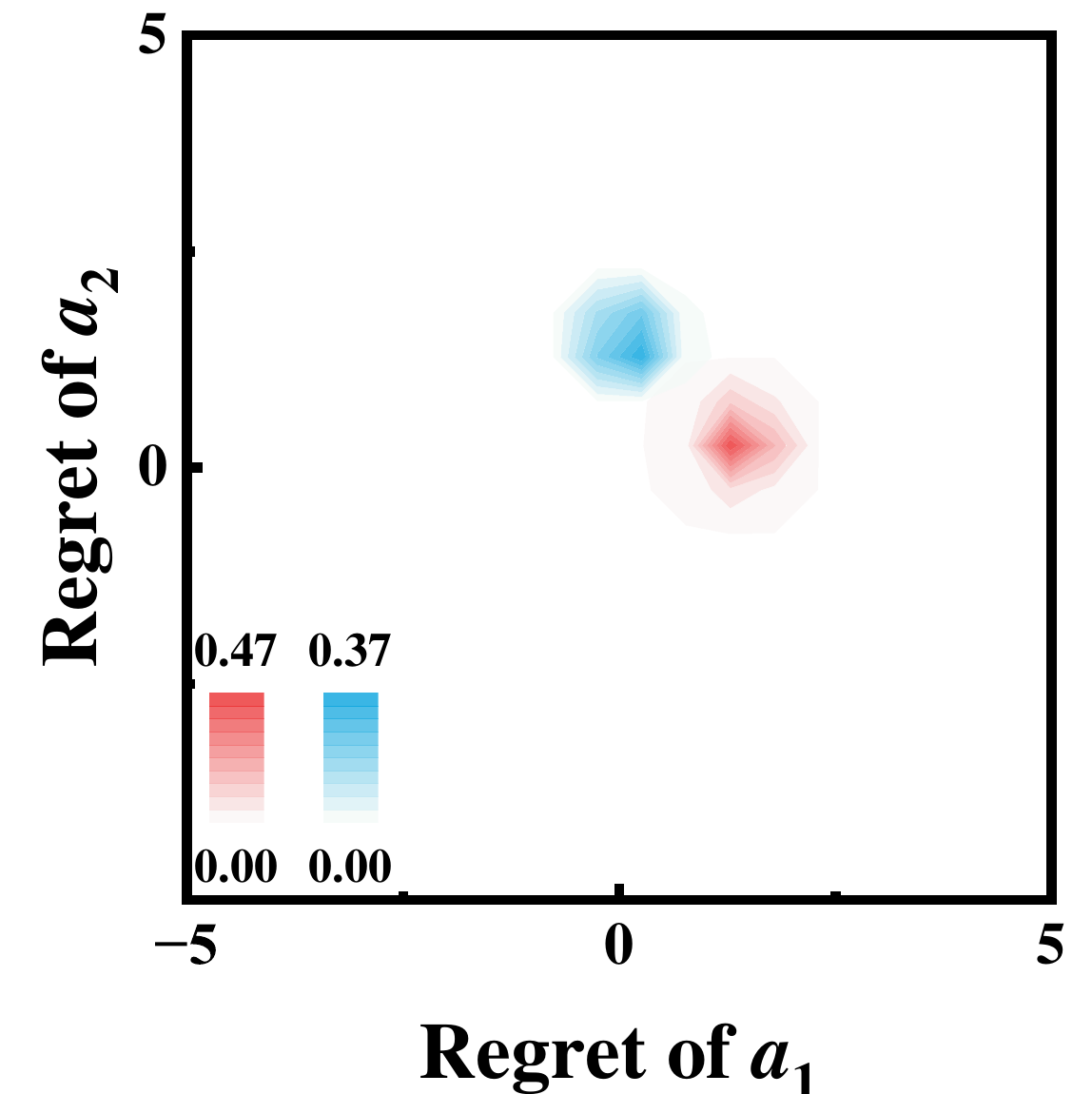}
    }
    \subfloat[BoS step=14\label{fig:bos_s14}]{
        \includegraphics[width=0.23\textwidth, height=0.23\textwidth]{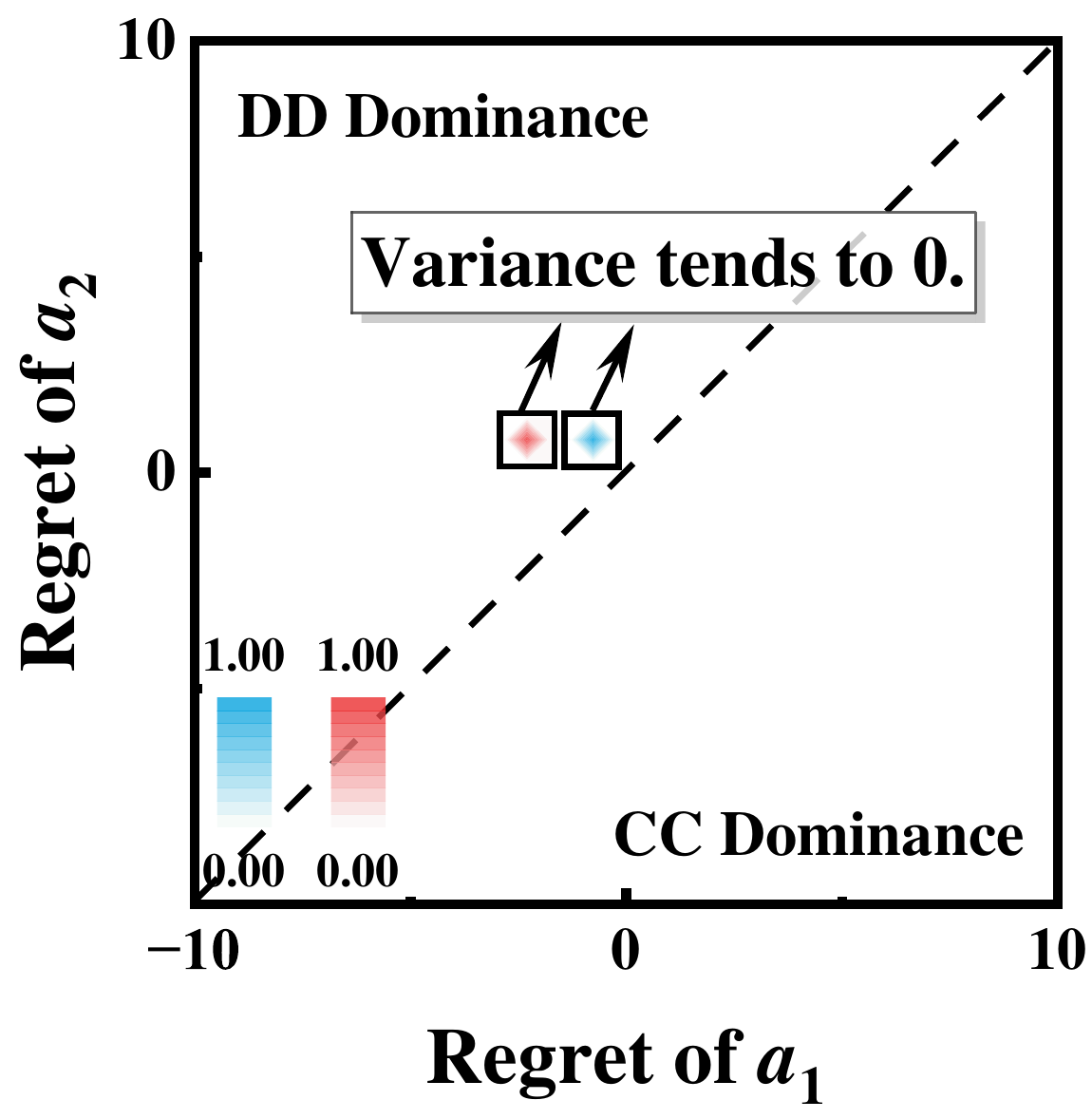}
    }
    \vfill
    \subfloat[PD step=1\label{fig:pd_s1}]{
        \includegraphics[width=0.23\textwidth, height=0.23\textwidth]{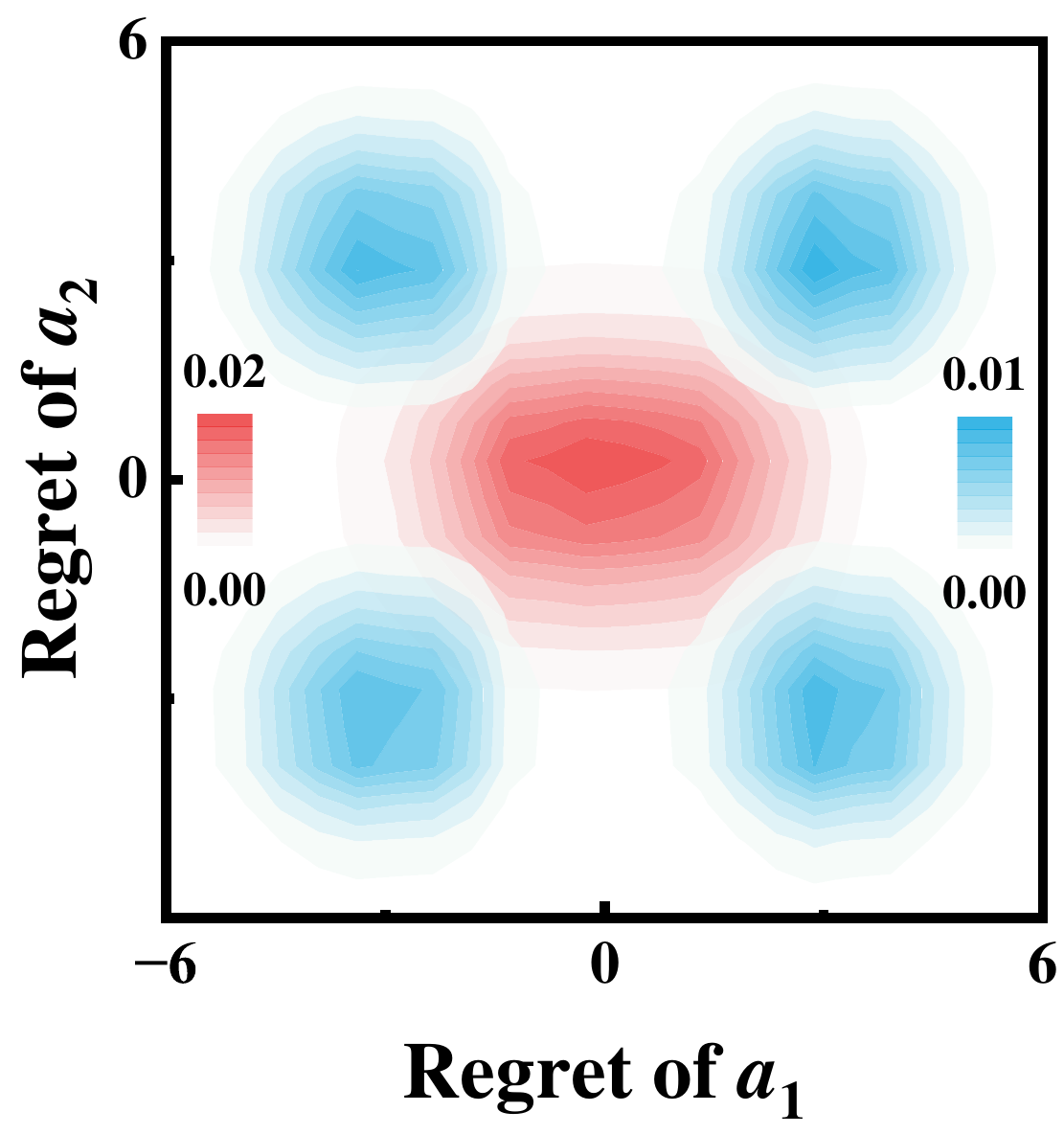}
    }
    \subfloat[PD step=2\label{fig:pd_s2}]{
        \includegraphics[width=0.23\textwidth, height=0.23\textwidth]{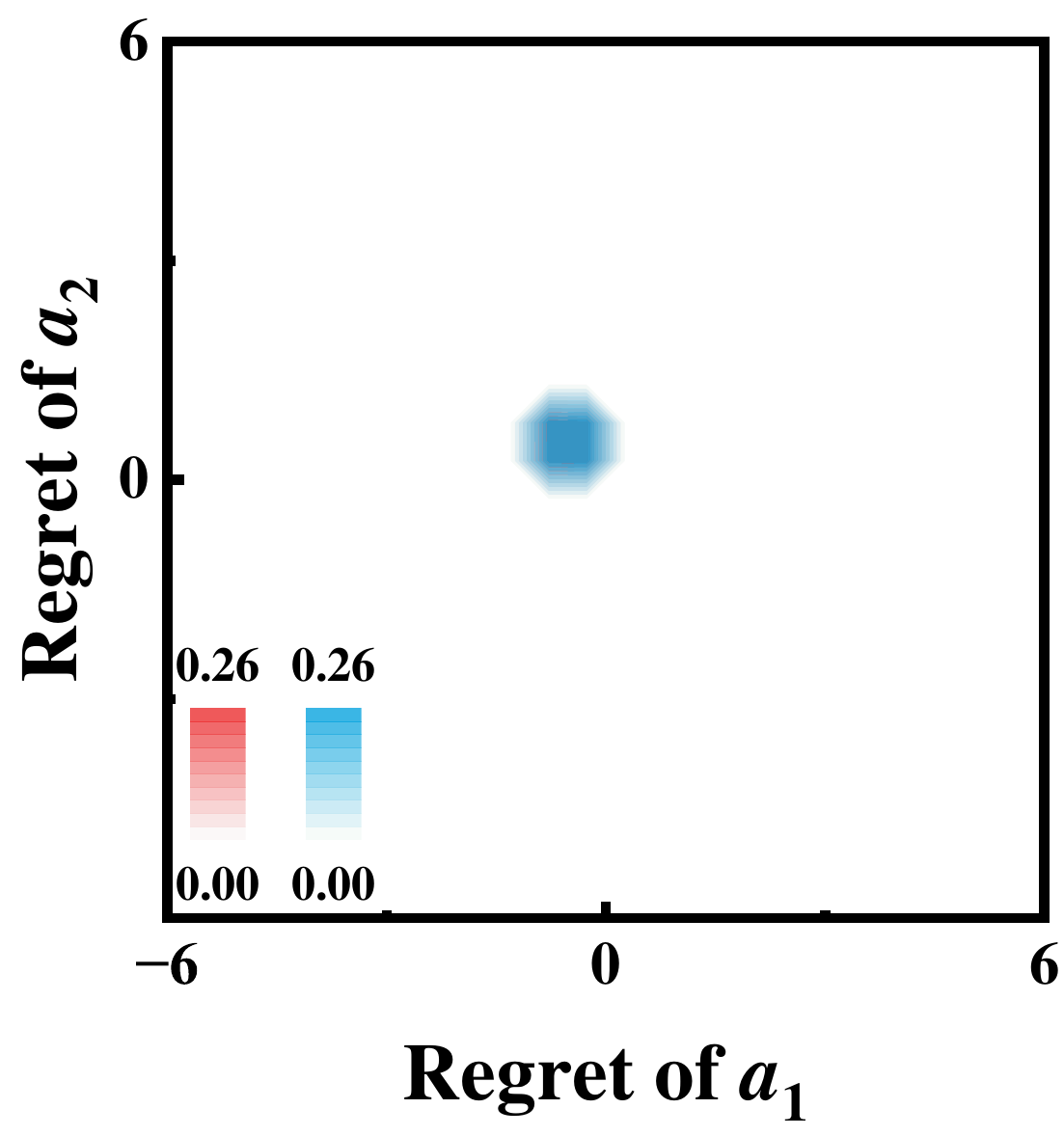}
    }
    \subfloat[PD step=4\label{fig:pd_s4}]{
        \includegraphics[width=0.23\textwidth, height=0.23\textwidth]{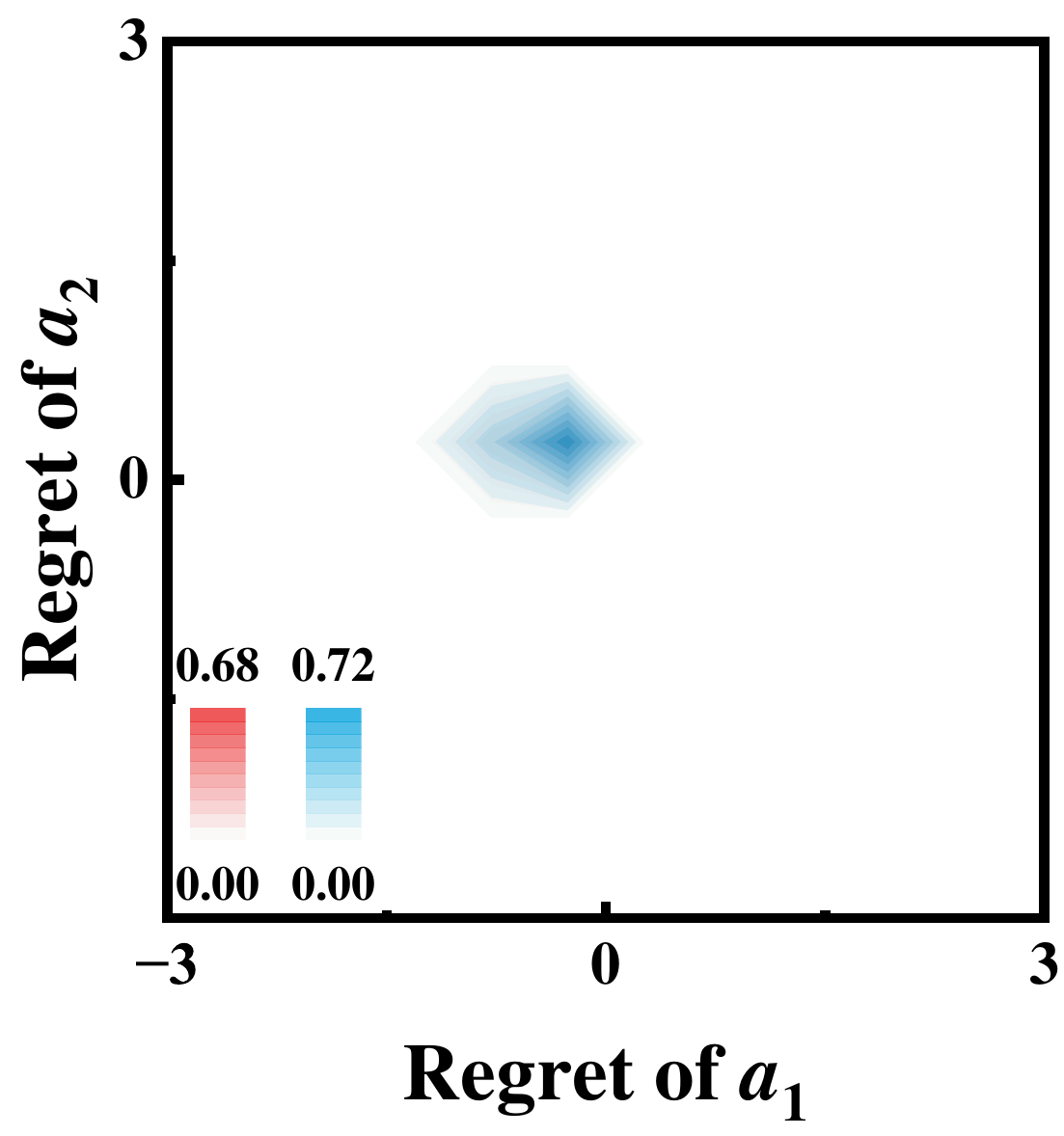}
    }
    \subfloat[PD step=20\label{fig:pd_s20}]{
        \includegraphics[width=0.23\textwidth, height=0.23\textwidth]{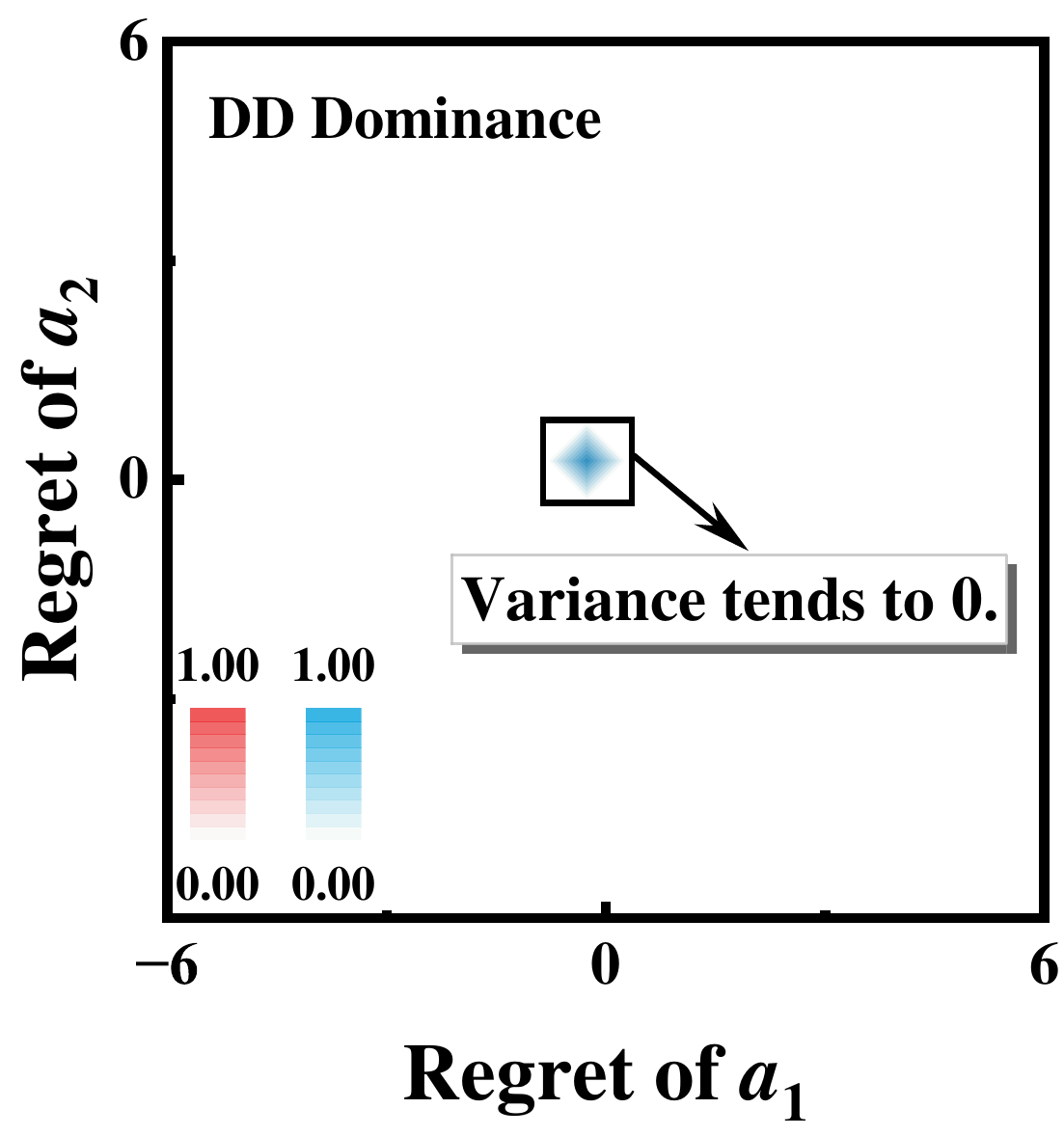}
    }
    \caption{Evolution of the distributions of regrets in two populations (red for population 1, and blue for population 2). The intensity of the color indicates the density (frequency) of agents. The upper panel corresponds to the BoS game, and the bottom panel corresponds to the PD game. It can be observed that agents' regrets become more concentrated over time, suggesting that the heterogeneity within the population gradually disappears.}
    \label{fig:varcontour}
    \vspace{-0.1in}
\end{figure*}

\section{Convergence Analysis in Weighted Zero-Sum and Potential Population Network Games}

Building upon the aforementioned results, in this section, we analyze the convergence properties in two important classes of games: weighted zero-sum PNGs and potential PNGs.
We shall show that under the softmax action selection function, agents' strategies will eventually converge to the Quantal Response Equilibria (QRE) of these games.
Prior to starting our convergence results, we define these games and the QRE.

\begin{definition}[Weighted Zero-sum PNG]\label{def:WZSPNG}
    A population network game is a weighted zero-sum (or competitive) if there exist positive constants $\omega_1,\ldots,\omega_n$ such that
    \begin{equation}
        \sum_{i\in V}\omega_iu_i(\mathbf{x})=\sum_{(i,j)\in E}\sum_{k\in v_i}(\omega_i\mathbf{x}_{i_k}^\top\mathbf{A}_{ij}\bar{\mathbf{x}}_j+\omega_j\bar{\mathbf{x}}_j^\top\mathbf{A}_{ji}\mathbf{x}_{i_k})=0,
    \end{equation}
    for all populations.
\end{definition}
\begin{definition}[Weighted Potential PNG]\label{def:WPPNG}
    A population network game is a weighted potential if there exists a potential function
    $U:\mathbf{x}\to\mathbb{R}$ and a vector of positive weights $w_1,\ldots,w_n$ such that
    \begin{equation}
        \begin{aligned}
             & u_i(\mathbf{x}_{i_k}, \bar{\mathbf{x}}_{-i})-u_i(\mathbf{x}_{i_k}^\prime, \bar{\mathbf{x}}_{-i})                                                      \\
             & =w_i[U(\mathbf{x}_i, \bar{\mathbf{x}}_{-i})-U(\mathbf{x}_{i_k}^\prime, \bar{\mathbf{x}}_{-i})],\ \mathbf{x}_{i_k},\mathbf{x}_{i_k}^\prime\in\Delta_i,
        \end{aligned}
    \end{equation}
    for any agent $k$ in each population $i$.
\end{definition}
We remark that weighted zero-sum PNGs include all the PNGs where the 2-player subgames are weighted zero-sum, and likewise, weighted potential PNGs include all the PNGs where the 2-player subgames admit a potential function.

On the other hand, QRE is a solution concept in game theory, which extends the standard Nash equilibrium to account for bounded rationality~\cite{mckelvey1995quantal}.
In a QRE, instead of directly choosing the action with the highest (observed) utility, individuals make decisions with the probability that is positively related to the payoff from that action. We generalize the QRE to PNG and define them as follows.

\begin{definition}[Quantal Response Equilibria of PNG]
    In a population network game, a quantal response equilibrium corresponds to the system state where for any agent $k$ in each population $i$, its policy satisfies
    \begin{equation}
        \mathbf{x}_{i_k}^*=\frac{\exp[\lambda\sum_{j\in V_i}\frac{1}{\vert V_j\vert}\mathbf{A}_{ij}\bar{\mathbf{x}}_j^*]}{\sum_{b\in S_i}\exp[\lambda\mathbf{e}_{ib}^\top\sum_{j\in V_i}\frac{1}{\vert V_j\vert}\mathbf{A}_{ij}\bar{\mathbf{x}}_j^*]},
        \label{eq:qre}
    \end{equation}
    where $\sum_{j\in V_i}\frac{1}{\vert V_j\vert}\mathbf{A}_{ij}\bar{\mathbf{x}}_j^*$ is the expected payoff of agent $k$ according to policy $\mathbf{x}_{i_k}^*$.
\end{definition}
Observe that for every agent in a population $i$, the right-hand side of \Cref{eq:qre} is the same.
Hence, our definition of QRE assumes that all the agents of a population will develop the same policy at the equilibrium state.
This is feasible under our concerned setting since \Cref{theorem:VarDecay} has indicated that the variance of regrets will always decay to zero, and consequently, agents will converge to the same policy.

We are now ready to present our convergence result:

\begin{theorem}[Convergence to QRE in Weighted Zero-sum and Weighted Potential PNG]\label{theorem:hete_convergence}\label{theorem:convergence}
    Agents' policies converge to a unique quantal response equilibrium in a weighted zero-sum population network game and to a compact connected set of quantal response equilibria in a weighted potential population network game.
\end{theorem}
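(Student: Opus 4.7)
The plan is to push everything through the softmax map so the regret-level results of \Cref{theorem:heto_dERdt,theorem:VarDecay} turn into a statement about the policy trajectories, and then invoke the asymptotically autonomous theory of Thieme \cite{1992thiemeConvergenceResultsPoincareBendixson} to inherit convergence from the well-studied smooth Q-learning ODE. Concretely, for each population $i$ define $\bar{\mathbf{x}}_i(t)$ as the softmax image of $\bar{\mathbf{R}}_i(t)$ as in the equation immediately preceding \Cref{theorem:regretdynamics}. Differentiating this relation and substituting \Cref{eq:heto_dERdt}, I would derive a non-autonomous ODE of the form
\begin{equation*}
    \frac{d\bar{\mathbf{x}}_i}{dt}=\frac{\lambda}{t}\,\bar{\mathbf{x}}_i\odot\bigl[g_i(\{\bar{\mathbf{x}}_j\}_{j\in V_i})-\mathbf{1}\,\bar{\mathbf{x}}_i^\top g_i(\{\bar{\mathbf{x}}_j\}_{j\in V_i})\bigr]+\Phi_i(t),
\end{equation*}
where $g_i$ collects the expected payoff plus a logarithmic entropic term of the form $-\tfrac{1}{\lambda}\ln\bar{\mathbf{x}}_i$, and $\Phi_i(t)$ is the residual coming from the variance term in \Cref{theorem:heto_dERdt}. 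By \Cref{theorem:VarDecay}, $\operatorname{Var}(\mathbf{R}_j)=\sigma^2(\mathbf{R}_j)/t^2$, so after a Taylor bound on $f''$ over the compact simplex we obtain $\Phi_i(t)=O(1/t^3)$, which is integrable at infinity.

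The second step is to pass to the rescaled time $\tau=\ln t$; under this change the $1/t$ factor in front of the drift disappears and the equation takes the autonomous-plus-vanishing-perturbation form
\begin{equation*}
    \frac{d\bar{\mathbf{x}}_i}{d\tau}=F_i(\bar{\mathbf{x}})+\Psi_i(\tau),\qquad \Psi_i(\tau)\to 0\ \text{as}\ \tau\to\infty,
\end{equation*}
where $F_i$ is precisely the smooth (Boltzmann) Q-learning vector field on $\Delta_i$, i.e.\ the replicator-with-mutation dynamics of \cite{2003tuylsSelectionmutationModelQlearning}. This is the content I intend to package as a preliminary lemma \Cref{lemma:dxdt} alluded to in the Introduction. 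The forward invariance of the product of simplices $\prod_i\Delta_i$ is then standard, and Thieme's theorem applies: the $\omega$-limit set of the non-autonomous system is contained in a chain-recurrent subset of the $\omega$-limit set of the limiting autonomous smooth Q-learning flow.

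The third step is to import the known convergence theory of smooth Q-learning on the two game classes, suitably extended to the PNG setting. For weighted zero-sum PNGs, I would exhibit the KL-divergence to the unique interior fixed point of $F$ as a strict Lyapunov function, using the weighting $\omega_i$ from \Cref{def:WZSPNG} to cancel the cross-population terms when summing over $i\in V$; uniqueness of the QRE follows from strict concavity of the regularised weighted potential in the zero-sum case. For weighted potential PNGs, the weighted free-energy functional $U(\mathbf{x})+\tfrac{1}{\lambda}\sum_i w_i^{-1}H(\bar{\mathbf{x}}_i)$, with $H$ the Shannon entropy, serves as a Lyapunov function whose critical points coincide with the set of QREs defined by \Cref{eq:qre}; LaSalle's invariance principle on the compact polytope $\prod_i\Delta_i$ then gives convergence to a compact connected component of QREs.

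The main obstacle I anticipate is the bookkeeping in the first step: the variance term in \Cref{theorem:heto_dERdt} lives on the regret space, and its push-forward through the nonlinear softmax must be controlled uniformly in time so that the residual really is $o(1)$ in the rescaled time and satisfies the hypotheses of Thieme's theorem (in particular, that the limit system has no cycles of chain-recurrent points for the zero-sum case). A clean way to handle this is to work on the compact simplex, use that the softmax and its derivatives are globally Lipschitz there, and combine this with the explicit $\sigma^2(\mathbf{R}_j)/t^2$ bound to absorb the higher-order terms into the perturbation $\Psi_i(\tau)$. Once that is in place, the convergence statements of \Cref{theorem:convergence} follow by appealing to the Lyapunov arguments above.
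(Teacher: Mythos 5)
Your proposal follows essentially the same route as the paper: push the regret dynamics through the softmax, use the variance decay of \Cref{theorem:VarDecay} and the time rescaling $\tau=\ln t$ to show the policy dynamics is asymptotically autonomous with the smooth Q-learning (replicator-with-mutation) flow as limit equation, and then transfer convergence via Thieme's theorem. The only substantive difference is that you sketch the Lyapunov/LaSalle arguments for the limiting smooth Q-learning system directly, whereas the paper imports that convergence as a black box from Leonardos et al.; your explicit attention to the $O(1/t^{3})$ size of the variance-induced perturbation and to ruling out cyclic chains of equilibria in Thieme's alternative is, if anything, more careful than the paper's sketch.
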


\begin{proofsketch}
    Here is a streamlined explanation step by step.

    (1) \textbf{Foundation about Asymptotically Autonomous.}
    Given the time-varying nature of the system \eqref{eq:heto_dERdt}, directly obtaining convergence results is challenging. To address this, we apply the asymptotically autonomous theorem to approximate the limit behavior of \eqref{eq:heto_dERdt} through a simpler dynamics of its autonomous system. The definition of asymptotically autonomous is first given as follows.

    \begin{definition}[Asymptotically Autonomous \cite{markus1956asymptotically}]\label{def:asy}
        Consider generalized differential equations
        \begin{subequations}\label{eq:asy_auto}
            \begin{align}
                 & \dot{x}=f(t,x),\label{eq:asy_auto_a} \\
                 & \dot{y}=g(y),\label{eq:asy_auto_b}
            \end{align}
        \end{subequations}
        in $\mathbf{R}^n$. We say \eqref{eq:asy_auto_a} is asymptotically autonomous - with limit \eqref{eq:asy_auto_b} if
        \begin{equation}
            f(t,x)\to g(x),\ t\to\infty,
        \end{equation}
        locally uniformly in $x\in\mathbf{R}^n$.
    \end{definition}

    This principle states that if, over time, the dynamics of a time-varying system approaches that of the limit equation (an autonomous system), then the time-varying system is considered asymptotically autonomous. In other words, as time goes on, the behavior of the time-varying system will align with the autonomous system.

    (2) \textbf{Application to Regret Dynamics.} Building on \Cref{theorem:VarDecay}, which posits that the variance of regrets asymptotically tends to zero, we can establish the following lemma.

    \begin{lemma}\label{lemma:heto_dERdt}
        Let $\tau=\ln t$.
        For a system that initially has heterogeneous regrets, the time-varying mean regret dynamic \eqref{eq:heto_dERdt} is asymptotically autonomous with the limit equation:
        \begin{equation}\label{eq:rlimiteq}
            \frac{d\mathbf{R}_{i}}{d\tau}=\mathbf{r}_i-\mathbf{R}_i,
        \end{equation}
        which is equivalent to the regret dynamic for homogeneous systems after time reparameterization.
    \end{lemma}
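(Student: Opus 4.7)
The plan is to combine two ingredients already in hand: the second-order Taylor expansion for the mean-regret ODE given in \Cref{theorem:heto_dERdt}, and the explicit $O(1/t^{2})$ decay of the regret variance established in \Cref{theorem:VarDecay}. The change of variables $\tau=\ln t$ is the natural one because it absorbs the $1/t$ factor appearing on the right-hand side of \eqref{eq:heto_dERdt}, so that the limiting autonomous system can be read off by inspection.

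First I would apply the chain rule. Since $dt = t\,d\tau$, substituting into \eqref{eq:heto_dERdt} gives
\begin{equation*}
\frac{d\bar{\mathbf{R}}_i}{d\tau} \;=\; t\,\frac{d\bar{\mathbf{R}}_i}{dt} \;=\; \bigl[f(\{\bar{\mathbf{R}}_j\}_{j\in V_i}) - \bar{\mathbf{R}}_i\bigr] + \tfrac{1}{2}\,f''(\{\bar{\mathbf{R}}_j\}_{j\in V_i})\,\operatorname{Var}(\{\mathbf{R}_j\}_{j\in V_i}).
\end{equation*}
Then I would invoke \Cref{theorem:VarDecay} to substitute $\operatorname{Var}(\mathbf{R}_j)=\sigma^{2}(\mathbf{R}_j)/t^{2}=\sigma^{2}(\mathbf{R}_j)\,e^{-2\tau}$, so that the variance-correction term decays geometrically in $\tau$. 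The candidate limit equation is therefore $\dot{\mathbf{R}}_i = f(\{\mathbf{R}_j\}_{j\in V_i}) - \mathbf{R}_i = \mathbf{r}_i - \mathbf{R}_i$, which is exactly the regret dynamic one would obtain under initial homogeneity (where the variance term is identically zero for all $t$); this gives the second claim of the lemma for free.

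To establish the first claim I would verify the locally uniform convergence required by \Cref{def:asy}. On any compact set $K\subset\prod_{i\in V}\Omega_i$, the coefficient $f''$---whose components are derivatives of softmax-weighted quantities and are $C^{\infty}$ in the regret arguments---is bounded by some $M_{K}<\infty$, so the remainder is dominated by $\tfrac{1}{2}M_{K}\max_{j}\sigma^{2}(\mathbf{R}_j)\,e^{-2\tau}$, which tends to zero uniformly on $K$ as $\tau\to\infty$. This is precisely the hypothesis of the asymptotically autonomous framework, and it is what will allow the subsequent convergence analysis in \Cref{theorem:convergence} to transfer long-run behavior from the limit equation back to the actual non-autonomous system.

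The main obstacle I anticipate is not the algebra but this uniformity check, because the Taylor remainder in \Cref{theorem:heto_dERdt} formally involves derivatives of softmax-weighted integrals that depend on the full distribution $p(\mathbf{R}_j,t)$ rather than on $\bar{\mathbf{R}}_j$ alone. I would handle this by exploiting that the softmax and all its derivatives are globally bounded in the regret argument, so $M_{K}$ can be taken finite independently of $t$; combined with the deterministic $e^{-2\tau}$ decay from \Cref{theorem:VarDecay}, this yields locally uniform convergence on each compact $K$ and closes the argument.
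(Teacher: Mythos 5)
Your proposal is correct and follows essentially the same route as the paper: the chain rule with $\tau=\ln t$ absorbs the $1/t$ factor, and \Cref{theorem:VarDecay} kills the variance-dependent correction so that the limit equation $\dot{\mathbf{R}}_i=\mathbf{r}_i-\mathbf{R}_i$ emerges. If anything, your explicit verification of the locally uniform convergence required by \Cref{def:asy} (bounding $f''$ on compacts and using the $e^{-2\tau}$ decay) is more careful than the paper's own argument, which asserts the passage to the homogeneous limit without checking uniformity.
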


    By reparameterizing time as $\tau=\ln t$, we conclude that the time-varying mean regret dynamic \eqref{eq:heto_dERdt} in a system with initially heterogeneous regrets approaches an autonomous system \eqref{eq:rlimiteq}.

    \textbf{(3) Transition to Policy Dynamic.} Then by the chain rule, we focus on the policy dynamics.
    \begin{corollary}[Mean Policy Dynamics]\label{corollary:heto_dExdt}

        For each population $i\in V$, the continuous-time dynamic of the mean policy $\bar{x}_{ia}$ can be described by
        \begin{equation}\label{eq:heto_dExdt}
            \frac{\partial \bar{x}_{ia}}{\partial t}
            =\frac{\partial \bar{x}_{ia}}{\partial \bar{R}_{ia}}\frac{\partial \bar{R}_{ia}}{\partial \tau}+\sum_{a'\neq a}\frac{\partial \bar{x}_{ia}}{\partial \bar{R}_{ia'}}\frac{\partial \bar{R}_{ia'}}{\partial \tau}
        \end{equation}
        where $a\in S_i$ and $\bar{R}_{ia}$ is the element of mean regret $\bar{\mathbf{R}}_i$.

    \end{corollary}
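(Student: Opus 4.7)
My plan is to derive \Cref{corollary:heto_dExdt} as a one-line application of the multivariable chain rule, preceded by a preliminary reduction that expresses the mean policy as a function of the mean regret vector alone. Specifically, I would invoke \Cref{theorem:VarDecay}: because $\operatorname{Var}(\mathbf{R}_i)=\sigma^2(\mathbf{R}_i)/t^2\to 0$, the distribution $p(\mathbf{R}_i,t)$ concentrates on $\bar{\mathbf{R}}_i$, so the mean policy coincides (up to the same moment-closure truncation used to derive \Cref{theorem:heto_dERdt}) with the softmax evaluated at the mean regret vector,
$$\bar{x}_{ia}=\frac{\exp(\lambda\bar{R}_{ia})}{\sum_{b\in S_i}\exp(\lambda\bar{R}_{ib})},\qquad a\in S_i.$$
This realizes $\bar{x}_{ia}$ as a smooth function of the $|S_i|$ components $\{\bar{R}_{ib}\}_{b\in S_i}$, with no other time dependence.

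I then differentiate this composition in $t$. Since \Cref{lemma:heto_dERdt} naturally expresses the mean regret dynamics under the reparameterization $\tau=\ln t$, the multivariable chain rule gives
$$\frac{d\bar{x}_{ia}}{dt}=\sum_{a'\in S_i}\frac{\partial\bar{x}_{ia}}{\partial\bar{R}_{ia'}}\,\frac{d\bar{R}_{ia'}}{d\tau}\cdot\frac{d\tau}{dt},$$
and separating the diagonal term $a'=a$ from the off-diagonal terms $a'\neq a$ reproduces the displayed identity (the factor $d\tau/dt=1/t$ follows the convention implicit in \Cref{lemma:heto_dERdt}, and is suppressed in the corollary's statement). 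The individual partial derivatives $\partial\bar{x}_{ia}/\partial\bar{R}_{ia'}$ are the standard softmax Jacobian entries $\lambda\bar{x}_{ia}(\delta_{aa'}-\bar{x}_{ia'})$, but no explicit evaluation is needed to establish the \emph{structural} form claimed by the corollary.

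The main obstacle lies not in the chain-rule bookkeeping but in justifying the reduction of the first step: for a generic nonlinear map $\phi$ and a random regret vector $\mathbf{R}$, $\mathbb{E}[\phi(\mathbf{R})]\neq\phi(\mathbb{E}[\mathbf{R}])$, so writing $\bar{x}_{ia}$ as a function of only $\bar{\mathbf{R}}_i$ is not automatic. Making this step rigorous requires pairing \Cref{theorem:VarDecay} (which forces $p(\mathbf{R}_i,t)$ to concentrate at rate $1/t^2$) with the moment-closure/Taylor truncation adopted throughout \Cref{sec:dynamics}, so that the Jensen-style correction to $\mathbb{E}[\phi(\mathbf{R}_i)]$ is absorbed into the same approximation hierarchy already used to pass from the continuity equation to the mean-regret ODE. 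Once this identification is accepted, the corollary is a direct restatement of the chain rule.
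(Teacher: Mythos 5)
Your proposal matches the paper's own derivation, which likewise obtains the corollary as a direct application of the multivariable chain rule to the softmax map $\bar{x}_{ia}=\exp(\lambda\bar{R}_{ia})/\sum_{b\in S_i}\exp(\lambda\bar{R}_{ib})$, splitting the sum into the diagonal term and the $a'\neq a$ terms. Your two additional observations --- that identifying the mean policy with the softmax of the mean regret needs the variance-decay/moment-closure justification, and that the mismatch between $\partial t$ on the left and $\partial\tau$ on the right hides a factor $d\tau/dt=1/t$ --- are both correct and are points the paper passes over silently (its appendix writes $\partial x_{ia}/\partial\tau$ on both sides), so they strengthen rather than alter the argument.
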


    \begin{lemma}\label{lemma:dxdt}
        The continuous-time dynamic of population policy $x_{ia}$ is asymptotically autonomous with the limit equation:
        \begin{equation}\label{eq:xlimiteq}
            \frac{dx_{ia}}{d\tau}=x_{ia}[\lambda u_{i}(a,\bar{\mathbf{x}}_{-i})-\lambda\mathbf{x}_i^\top\mathbf{u}_i(\bar{\mathbf{x}}_{-i})-\ln x_{ia}+\mathbf{x}_{i}^\top\ln \mathbf{x}_{i}]
        \end{equation}
        where $\mathbf{u}_i(\bar{\mathbf{x}}_{-i})=\frac{1}{\vert V_i\vert}\sum_{j\in V_i}\mathbf{A}_{ij}\bar{\mathbf{x}}_{j}$.
    \end{lemma}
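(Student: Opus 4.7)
The plan is to derive the limit equation for the policy by composing the chain rule with the asymptotically autonomous regret dynamics of \Cref{lemma:heto_dERdt}. Since the softmax $x_{ia} = \exp[\lambda \bar{R}_{ia}]/\sum_b \exp[\lambda \bar{R}_{ib}]$ is smooth in the mean regret vector, convergence of the regret ODE in the reparameterized time $\tau = \ln t$ will transfer through the Jacobian to yield an asymptotically autonomous form for the policy ODE. At the level of equations, this reduces the problem to a short algebraic simplification.

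First, I would compute the softmax Jacobian, recovering the standard identity $\partial x_{ia}/\partial \bar{R}_{ib} = \lambda x_{ia}(\delta_{ab} - x_{ib})$. Substituting this into \Cref{corollary:heto_dExdt} and using \Cref{lemma:heto_dERdt} to replace $d\bar{R}_{ib}/d\tau$ by its limit $r_{ib} - \bar{R}_{ib}$ yields $dx_{ia}/d\tau = \lambda x_{ia}\bigl[(r_{ia} - \bar{R}_{ia}) - \sum_b x_{ib}(r_{ib} - \bar{R}_{ib})\bigr]$. The policy-weighted average of instantaneous regrets vanishes, because by \Cref{eq:Erika} one has $r_{ib} = u_i(b, \bar{\mathbf{x}}_{-i}) - \mathbf{x}_i^\top \mathbf{u}_i(\bar{\mathbf{x}}_{-i})$, so $\sum_b x_{ib} r_{ib} = 0$; only the single $r_{ia}$ term survives alongside the cumulative-regret difference $-\bar{R}_{ia} + \sum_b x_{ib}\bar{R}_{ib}$.

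Next I would eliminate the remaining $\bar{R}_i$ terms via the inverse softmax $\lambda \bar{R}_{ia} = \ln x_{ia} + \ln Z$, where $Z = \sum_b \exp[\lambda \bar{R}_{ib}]$ is the normalizer. The key observation is that $\ln Z$ cancels between $\bar{R}_{ia}$ and $\sum_b x_{ib} \bar{R}_{ib}$, because $\sum_b x_{ib} = 1$; what remains is exactly $(-\ln x_{ia} + \mathbf{x}_i^\top \ln \mathbf{x}_i)/\lambda$. Multiplying through by $\lambda x_{ia}$ and expanding $r_{ia}$ in terms of $\mathbf{u}_i(\bar{\mathbf{x}}_{-i})$ produces precisely the target equation \eqref{eq:xlimiteq}.

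The main obstacle is technical rather than computational: justifying that the chain rule preserves the asymptotically autonomous property in the sense of \Cref{def:asy}. I would address this by noting that $\mathbf{x}_i$ is a $C^\infty$ image of $\bar{\mathbf{R}}_i$ whose Jacobian is uniformly bounded on bounded regret sets, and that regrets stay bounded along trajectories since instantaneous regrets are controlled by the payoff range. The local-uniform convergence of the regret vector field to its autonomous limit then passes through the bounded continuous Jacobian to give local-uniform convergence of the policy vector field, so the policy dynamics inherits the asymptotically autonomous structure and its limit equation is as displayed.
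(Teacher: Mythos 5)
Your proposal is correct and follows essentially the same route as the paper's proof: chain rule through the softmax Jacobian $\partial x_{ia}/\partial R_{ib} = \lambda x_{ia}(\delta_{ab}-x_{ib})$, substitution of the limit regret dynamics $dR_{ib}/d\tau = r_{ib}-R_{ib}$ from Lemma~\ref{lemma:heto_dERdt}, and elimination of the residual cumulative-regret terms via the inverse-softmax identity (the paper uses the equivalent ratio form $\sum_{s}x_{is}\ln(x_{is}/x_{ia})=\lambda\sum_{s}x_{is}(R_{is}-R_{ia})$, while you cancel the log-normalizer $\ln Z$ directly). Your closing paragraph justifying that local-uniform convergence of the regret vector field passes through the bounded softmax Jacobian to the policy vector field addresses a technical point the paper leaves implicit, but it does not alter the substance of the argument.
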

    We observe that \Cref{eq:xlimiteq} (the limit equation of policy dynamics \eqref{eq:heto_dExdt}) is reduced to a smooth Q-learning dynamic with the learning rate $\alpha=1$. This smooth Q-learning dynamic, initially introduced by Tuyls et al. \cite{2003tuylsSelectionmutationModelQlearning}, has garnered significant interest within the community \cite{hennes2009state,kaisers2010frequency,kianercy2012dynamics,panozzo2014evolutionary}. Applying asymptotically autonomous theorem (\Cref{def:asy}), we conclude that the limit behavior of smooth regret matching tends to the behavior of Q-learning over time. This allows us to leverage convergence results to gain insights into the dynamics of the challenging time-varying system.

    \textbf{(4) Convergence to QRE.} Prior to demonstrating the convergence result, we introduce the convergence proof for the n-player game as established by Leonardos et al.

    \begin{lemma}[Leonardos et al.~\cite{nips2021leonardos} Theorem 4.1 and~\cite{2021leonardosExplorationExploitationMultiAgentLearning} Theorem 3]\label{lemma:leonardos}
        Given a positive exploration rate $\lambda>0$. In an n-player weighted zero-sum network game, the sequence of play generated by the SQL dynamic converges to a unique QRE exponentially fast.
        In an n-player weighted potential game, the sequence of play generated by the SQL dynamic converges to a compact connected set of QRE.
    \end{lemma}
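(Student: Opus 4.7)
The plan is to prove convergence separately in the weighted zero-sum and weighted potential cases, in each case exhibiting a Lyapunov function that governs the flow of the smooth Q-learning dynamic in \Cref{eq:xlimiteq}. As a preliminary step, I would identify the rest points of \Cref{eq:xlimiteq} with the QRE of the game at exploration level $\lambda$: setting the right-hand side to zero and using $x_{ia}>0$ on the interior of the simplex forces the bracketed quantity $\lambda u_i(a,\bar{\mathbf{x}}_{-i})-\ln x_{ia}$ to be independent of $a\in S_i$, and exponentiating with normalization recovers exactly the fixed-point condition \Cref{eq:qre}. Existence of at least one QRE then follows from a standard Brouwer-type argument applied to the logit-regularized best-response map on the product of simplices.

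For a weighted zero-sum PNG, my approach would be to take as Lyapunov function the weighted Kullback--Leibler divergence from any QRE $x^{\ast}$, namely $V(x)=\sum_{i\in V}\omega_{i}\,D_{\mathrm{KL}}(x_{i}^{\ast}\Vert x_{i})$, where the $\omega_{i}$ are the weights provided by Definition \ref{def:WZSPNG}. Differentiating $V$ along \Cref{eq:xlimiteq} splits the drift into a bilinear ``game'' part and an ``entropic'' part. The weighted zero-sum identity forces the game part to cancel identically, since pairwise contributions of the form $\omega_{i}x_{i}^{\top}\mathbf{A}_{ij}\bar{x}_{j}+\omega_{j}\bar{x}_{j}^{\top}\mathbf{A}_{ji}x_{i}$ sum to zero when collected over edges. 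What remains is a strictly negative Bregman-type term; a Pinsker-style estimate then upgrades this to $\dot{V}\leq -cV$ for some $c>0$, yielding exponential convergence and, by strict monotonicity, excluding any rest point other than $x^{\ast}$.

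For a weighted potential PNG, I would instead use the perturbed (entropy-regularized) potential $\Phi(x)=U(x)+\sum_{i\in V}(\omega_{i}/\lambda)\,H(x_{i})$, where $H(x_{i})=-\sum_{a}x_{ia}\ln x_{ia}$ is Shannon entropy. Differentiating $\Phi$ along \Cref{eq:xlimiteq} and substituting $\partial U/\partial x_{ia}=u_{i}(a,\bar{x}_{-i})/\omega_{i}$ via Definition \ref{def:WPPNG}, the computation collapses to a sum of nonnegative terms weighted by $x_{ia}/\lambda$, each proportional to the squared deviation of $\lambda u_{i}(a,\bar{x}_{-i})-\ln x_{ia}$ from its $x_{i}$-mean. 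Hence $\dot{\Phi}\geq 0$ with equality precisely at QRE. Since $\Phi$ is continuous and the simplex is compact, $\Phi$ is bounded and trajectories are relatively compact; LaSalle's invariance principle then confines every $\omega$-limit set to the QRE set, and standard arguments for bounded ODE orbits yield compactness and connectedness.

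The main obstacle is the weighted zero-sum analysis. Verifying the cancellation of the game part of $\dot{V}$ requires careful bookkeeping because $V$ treats $x_{i}$ and $x_{i}^{\ast}$ asymmetrically, so the cross-terms must be reorganized along edges to apply Definition \ref{def:WZSPNG}; moreover, promoting nonpositivity to a strict exponential rate relies on strong convexity of negative entropy on each simplex together with uniform control away from the boundary. A subsidiary technicality, relevant to both cases, is showing trajectories remain in the relative interior of the simplex so that the $\ln x_{ia}$ terms stay well-defined; this follows because $\dot{x}_{ia}/x_{ia}$ is bounded uniformly near $x_{ia}=0$ by \Cref{eq:xlimiteq}, but must be checked explicitly to justify the subsequent calculus.
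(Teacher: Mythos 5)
The paper never proves this lemma: it is imported wholesale as an external result (Theorem~4.1 of Leonardos et al.\ and Theorem~3 of the companion paper) and used as a black box in the proof of \Cref{theorem:convergence}, so there is no in-paper argument to compare yours against. What you have written is, in outline, a faithful reconstruction of the argument in the cited works, and the skeleton is sound: rest points of \Cref{eq:xlimiteq} in the relative interior are exactly the QRE of \Cref{eq:qre}; in the weighted zero-sum case the weighted KL divergence to the QRE is a strict Lyapunov function whose bilinear ``game'' part cancels edge by edge via Definition~\ref{def:WZSPNG} and whose entropic remainder yields an exponential rate; in the weighted potential case an entropy-perturbed potential is nondecreasing with equality only at QRE, and LaSalle plus compactness gives convergence to a compact connected set of QRE.

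Three points need repair. First, your interior-invariance argument asserts that $\dot{x}_{ia}/x_{ia}$ is \emph{bounded} near $x_{ia}=0$; it is not --- the $-\ln x_{ia}$ term in \Cref{eq:xlimiteq} drives it to $+\infty$ there, and it is precisely this divergence (a repelling boundary) that keeps trajectories interior, so your conclusion survives but for the opposite reason. Second, the entropy weights in the perturbed potential must be the reciprocals of the potential weights: since Definition~\ref{def:WPPNG} gives $\partial U/\partial x_{ia}=u_i(a,\bar{\mathbf{x}}_{-i})/w_i$, the function that makes $\dot{\Phi}$ collapse into a sum of per-population variances is $\Phi(x)=U(x)+\sum_i\tfrac{1}{\lambda w_i}H(x_i)$, not $U(x)+\sum_i\tfrac{w_i}{\lambda}H(x_i)$; with your weights the cross terms do not align unless all $w_i=1$. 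Third, uniqueness of the QRE in the weighted zero-sum case is itself part of the cited theorem and must be extracted (e.g.\ by running the strict-decrease argument between two putative rest points), not merely asserted ``by strict monotonicity''; as stated, your argument only shows convergence to the particular $x^{\ast}$ you anchored $V$ at, presupposing it is unique.
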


    Based on \Cref{theorem:VarDecay}, as time approaches infinity, a heterogeneous population may be viewed as a single genetic agent, allowing us to align this asymptotic dynamic of homogeneous n-population network games with the dynamics observed in n-player games. To figure out if our policy dynamics' long-term behavior aligns seamlessly with the smooth Q-Learning dynamic, we draw upon Thieme~\cite{1992thiemeConvergenceResultsPoincareBendixson}'s seminal results on the connection between the limit behavior of an asymptotically autonomous system $\Phi$ and its limit equation $\Theta$.

    \begin{lemma}[Thieme~\cite{1992thiemeConvergenceResultsPoincareBendixson} Theorem 4.2]
        Given a metric space $(X,d)$, assuming that the equilibria of $\Theta$ are isolated compact $\Theta$-invariant subsets of $X$. Further, the $\omega$-$\Theta$-limit set of any pre-compact $\Theta$-orbit contains a $\Theta$-equilibrium, and the point $(s,x), s\geq t_0, x\in X$, have a pre-compact $\Phi$-orbit. Then the following alternative holds 1) $\Phi(t,s,x)\to e,t\to\infty$, for some $\Theta$-equilibrium $e$, and 2) the $\omega$-$\Phi$-limit set of $(s,x)$ contains finitely many $\Theta$-equilibria which are chained to each other in a cyclic way.
    \end{lemma}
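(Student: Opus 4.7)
The plan is to reduce the convergence question for our heterogeneous, time-varying regret dynamics to a convergence question for the (well-understood) smooth Q-learning (SQL) dynamics on the product of simplices, and then lift the answer back through Thieme's theorem on asymptotically autonomous systems. Concretely, I would first exploit \Cref{theorem:VarDecay} to argue that in the mean-regret ODE \eqref{eq:heto_dERdt} the variance-weighted Hessian term $\tfrac{1}{2t}f''\operatorname{Var}(\{\mathbf{R}_j\}_{j\in V_i})$ decays like $t^{-3}$ and is therefore negligible compared with the $t^{-1}$ drift. After the time reparameterisation $\tau = \ln t$ the $1/t$ factor is absorbed into $d\tau$, and the mean-regret equation becomes $d\bar{\mathbf{R}}_i/d\tau = f(\{\bar{\mathbf{R}}_j\}_{j\in V_i}) - \bar{\mathbf{R}}_i + o(1)$, which is asymptotically autonomous in the sense of \Cref{def:asy} with limit equation $d\mathbf{R}_i/d\tau = \mathbf{r}_i - \mathbf{R}_i$; this is exactly \Cref{lemma:heto_dERdt}.

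Next I would push this asymptotic autonomy through the softmax to the policy level. Using the chain rule (\Cref{corollary:heto_dExdt}) together with the standard softmax Jacobian $\partial\bar x_{ia}/\partial\bar R_{ib} = \lambda\bar x_{ia}(\delta_{ab}-\bar x_{ib})$, a direct substitution and rearrangement identify the limit equation for $\bar x_{ia}$ with the smooth Q-learning replicator-mutator dynamics: a replicator term driven by the expected payoff $\mathbf{u}_i(\bar{\mathbf{x}}_{-i})$, plus an entropic mutation term with exploration rate $\lambda$ (exactly \Cref{lemma:dxdt}, with learning rate $\alpha=1$). This identification is the crucial structural step, because it lets us import the whole convergence theory of SQL wholesale.

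With the identification in hand, the convergence statement is assembled from three cited ingredients. First, \Cref{lemma:leonardos} of Leonardos et al.\ gives convergence of the limit SQL dynamics to the unique QRE in weighted zero-sum network games and to a compact connected set of QRE in weighted potential games. Second, Thieme's theorem (quoted in the excerpt) transports these limit-equation conclusions back to the original time-varying flow, provided the policy orbit is pre-compact and the QRE equilibria of the limit system are isolated compact invariant sets; pre-compactness is automatic because policies live in the compact product of simplices, and isolation is inherited from the strict concavity of the entropic regulariser that appears in the limit dynamics. Third, to eliminate the Thieme alternative of cyclic chains among equilibria, in the zero-sum case I would invoke uniqueness of the QRE (there is nothing to cycle between), while in the potential case I would use the fact that the weighted potential $U$ augmented with the Shannon entropy is a strict Lyapunov function along SQL trajectories, forbidding cyclic orbits and forcing the $\omega$-limit to sit inside the connected set of QRE.

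The main obstacle I anticipate is the first step: verifying that the passage from \eqref{eq:heto_dERdt} to its limit equation is asymptotically autonomous in the precise locally-uniform sense of \Cref{def:asy}. The variance decays like $t^{-2}$, but it multiplies $f''$, whose bound depends on $\bar{\mathbf{x}}_j$ and hence on the regrets themselves; one must confirm that on every compact set of regret space the error term is uniformly $o(1)$, which requires using the smoothness and boundedness of the softmax composition on the simplex. Everything downstream — the chain-rule computation of the policy limit equation, the check of Thieme's hypotheses, and the ruling out of cyclic chains — is then essentially an assembly of results already in the literature.
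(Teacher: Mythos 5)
Your proposal does not actually prove the statement in question. The statement is Thieme's Theorem 4.2 itself --- a general result about asymptotically autonomous semiflows on a metric space $(X,d)$, asserting that under isolation and pre-compactness hypotheses every $\Phi$-orbit either converges to a $\Theta$-equilibrium or has an $\omega$-limit set containing finitely many equilibria chained cyclically. The paper does not prove this lemma; it imports it verbatim from Thieme~\cite{1992thiemeConvergenceResultsPoincareBendixson} and uses it as a black box. What you have written is instead a (reasonable) sketch of how the paper \emph{applies} this lemma, together with \Cref{lemma:heto_dERdt}, \Cref{lemma:dxdt}, and \Cref{lemma:leonardos}, to establish \Cref{theorem:convergence}. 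That is a proof outline for a different statement --- the downstream convergence theorem --- not for the quoted result.

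A genuine proof of the stated lemma would have to work at the level of abstract dynamical systems: one must analyze the $\omega$-$\Phi$-limit set of a pre-compact orbit of the non-autonomous process, show it is nonempty, compact, connected, and invariant and chain-recurrent under the limit semiflow $\Theta$, and then use the isolation of the $\Theta$-equilibria together with a Butler--McGehee-type argument to force the dichotomy between convergence to a single equilibrium and a cyclic chain of equilibria. None of these ingredients appear in your proposal; the variance-decay estimates, the softmax Jacobian computation, and the Lyapunov/uniqueness arguments you describe all concern the specific regret and policy dynamics of the paper, not the general structure of asymptotically autonomous flows. If your intent was to prove \Cref{theorem:convergence}, your outline matches the paper's proof sketch quite closely (including the same three cited ingredients and the same worry about local uniformity of the $o(1)$ term); but as a proof of the Thieme lemma it is a category error, since it assumes the lemma rather than establishing it.
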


    This means that if the equilibria of the limit policy dynamics $\Theta$~\eqref{eq:xlimiteq} are isolated and any solution of $\Theta$~\eqref{eq:xlimiteq} converges to one of them, every solution of $\Phi$~\eqref{eq:heto_dExdt} converges to an equilibrium of $\Theta$~\eqref{eq:xlimiteq} as well. With the application of Thieme's convergence results for asymptotically
    autonomous systems, every solution of our policy dynamics also converges to a unique QRE in a weighted zero-sum game and to a compact connected set of QRE in a weighted potential game. This concludes the proof.
\end{proofsketch}

Theorem \ref{theorem:convergence} guarantees that the real-time system behavior converges to QRE, regardless of how heterogeneous the system initially is.
Moreover, \Cref{theorem:convergence} provides a tractable approach to the equilibrium selection problem in weighted zero-sum and potential population network games.

We remark that in the realm of regret matching algorithms, Theorems 1, 2, and 3 exhibit a remarkable degree of universality, being applicable not only to the traditional regret matching framework but also to its refined iterations, including regret matching plus and predictive regret matching plus. Theorem \ref{theorem:convergence}, however, deviates from this pattern by mandating the adoption of a Boltzmann-based approach for policy updates, distinguishing it from the others. This specificity opens avenues for intriguing future research endeavors.

\section{Experiments}\label{sec:experiments}

In this section, we present our empirical study by comparing the behavior of our real-time population dynamics model with behaviors obtained from agent-based simulations.
For the model, we employ the moment closure technique to solve the intractable continuity equations depicted in \Cref{eq:PDF_R1}. For the agent-based simulations,
we perform multiple independent simulations with different regret initializations to provide evidence for our convergence results (\Cref{theorem:convergence}). Additionally, we use two concrete examples to validate the phenomena of vanishing heterogeneity (\Cref{theorem:VarDecay}) and study its impact on equilibrium selection.

\subsection{Experimental Settings}
We consider six typical 2-population network games to experiment on three weighted zero-sum games, i.e., presidential election~(PE), asymmetric matching pennies~(AMPs) and rock-paper-scissor~(RPS), and three weighted potential games, i.e., prisoner's dilemma~(PD), stag hunt~(SH) and the battle of the sexes~(BoS). The payoff bimatrices of these games are shown in \Cref{tab:payoff}.

The equilibria are diverse in these games.
Specifically, the PE, AMPs, and RPS games are weighted zero-sum games without pure Nash equilibrium.
However, each game has a specific mixed Nash equilibrium.
The mixed Nash equilibrium of PE game is $\left(\frac{3}{7}E+\frac{4}{7}S,\frac{2}{7}M+\frac{5}{7}T\right)$,
that of AMPs game is $\left(\frac{1}{3}H+\frac{2}{3}T,\frac{2}{3}H+\frac{1}{3}T\right)$,
and that of RPS game is $\left(\frac{1}{3}R+\frac{1}{3}P+\frac{1}{3}S,\frac{1}{3}R+\frac{1}{3}P+\frac{1}{3}S\right)$.
By contrast, the finite (weighted) potential games always have at least one pure Nash equilibrium~\cite{cheng2014finite}. For instance, the PD game has a unique  Nash equilibrium $(D, D)$. The SH game has two symmetric pure Nash equilibria, $(S, S)$ and $(H, H)$, whereas $(S, S)$ is the payoff-dominant equilibrium that maximizes social welfare.
There is also a symmetric mixed Nash equilibrium for the SH game, i.e., $\left(\frac{2}{3}S+\frac{1}{3}H,\frac{2}{3}S+\frac{1}{3}H\right)$.
For the BoS game, there exists two symmetric pure Nash equilibria, $(F, F)$ and $(B, B)$, and an asymmetric mixed Nash equilibrium $\left(\frac{2}{3}F+\frac{1}{3}B,\frac{1}{3}F+\frac{2}{3}B\right)$.

For each setting, we consider $100$ agents in each population and run 50 independent simulations to smooth out the randomness. the temperature for action selection is set to $1$.

\subsection{Convergence to Quantal Response Equilibriua}

Figure \ref{fig:sim_2population} illustrates the QRE sets through black dotted lines for each game along with a comparison of the expected policy, presented through dash lines obtained from our model, and the mean policy of the populations, represented by solid lines. Without loss of generality, it is assumed that the initial regret values for each action of each population are generated from a positive Normal distribution. Additional information about this is provided in supplementary materials.

Our model demonstrates a remarkable ability to effectively capture the distinctive patterns of evolution exhibited by agent populations involved in various kinds of games. Particularly noteworthy is our model indeed provides a description of the learning dynamics in the equilibrium selection in the context of weighted zero-sum games and weighted potential games.

\subsubsection{Weighted Zero-sum Games}
In weighted zero-sum games represented in \Cref{fig:sim_2population}(a)-(c), the mean policy of each population indeed converges to the unique QRE, as predicted in \Cref{theorem:convergence}. It is worth noting that in the AMPs and RSP games, the corresponding unique QRE is asymmetric due to the asymmetry of the game. As a result, the two populations will eventually converge to different strategies.

\subsubsection{Weighted Potential Games}
For weighted potential games presented in \Cref{fig:sim_2population}(d)-(f), it is evident that each mean policy converges to the QRE set, as predicted in \Cref{theorem:convergence}. Specifically, in the PD and SH games, the mean policy converges to the corresponding unique QRE.
However, in the BoS game, there are three QREs.
Though the initial distribution of regrets may lead to different outcomes, the policy always converges to one of the QRE.

\subsubsection{Multi-population in Different Networks}
Furthermore, we examine the evolution of the mean strategy and regret variance for each population in Watts-Strogatz networks within the context of prisoner's dilemma (a weighted potential game) and rock-paper-scissors (a weighted zero-sum game), as shown in \Cref{fig:multipopulation}. Initial regrets are sampled from normal distributions with different means of regret across populations, while the initial normalized variances of regret (denoted by $\sigma$) are the exactly same. Notably, when comparing \Cref{fig:multipopulation}(e) and (f) or (g) and (h), despite starting from the same mean policy,
we observe that the systems with larger initial variance in regret always achieve to homogeneity more slowly, requiring more time steps for the variance to approach zero when initial variance is larger.

\subsection{Vanishing Heterogeneity and its Impact on Equilibrium Selection}

Figure \ref{fig:varcontour} snapshots the evolution of joint regret probability distributions for two populations in BoS and PD games. In each game, the initial regret values for each action of each population are generated from a multimodal normal distribution.

In the BoS game, we can observe that the initial joint probability distribution for regrets among the population is relatively heterogeneous, with a maximum probability of joint initial regret being only 0.02.
As time evolves, agents learn and adjust their policies, which leads to a more concentrated distribution and a decrease in the variance of joint probability for regrets.
Eventually, by step 14, the distribution reaches a point where the maximum probability increases to 1 for both populations. In other words, the variance dropped to zero, implying that the system heterogeneity vanished and that all agents in the population had the same regret.
This supports the \Cref{theorem:VarDecay} that all agents among a population will ultimately adopt the same policy, despite starting with varying policies.

However, how the initial regret distribution determines which QRE will converge in those games with more than one QRE, such as the BoS game, naturally raises the problem of equilibrium selection. We note that when the variance decays to 0, the system has completed the transition from a state of heterogeneity to that of homogeneity. We then shift our attention to the homogeneous populations and model the dynamic for homogeneous populations using \Cref{eq:rlimiteq}.
This is illustrated in \Cref{fig:bos_s14}, which shows the attractive region for the different QRE.
From this, the equilibrium to which the system converges can be identified.

In the PD game, the variance also decreases to 0 as the game proceeds. Nevertheless, the PD game only has one QRE. Thus, as shown in \Cref{fig:pd_s1}-\ref{fig:pd_s20}, the probability density also concentrates at a single point, and all agents ultimately converge to that unique QRE.

\section{Discussion}
In this paper, we study a variant of regret matching and a common large-scale multi-agent setting that allows for heterogeneous agents, each starting with a generally different regret value. We start by modeling the dynamic with a partial differential equation for PNG in \Cref{theorem:regretdynamics}, which allows us to precisely examine the impact of heterogeneous regrets on system evolution. Then applying moment closure, we prove in \Cref{theorem:heto_dERdt} that the mean regret dynamic is affected not only by the mean regret itself but also by the variance of regrets, which quantifies the degree of heterogeneity in the system. We show in \Cref{theorem:VarDecay} that the system heterogeneity vanishes over time. In \Cref{theorem:convergence}, we establish the convergence of smooth regret matching to a unique quantal response equilibrium in weighted zero-sum population network games and to a compact connected set of quantal response equilibria in weighted potential population network games. The experiments through numerical simulations and agent-based simulations on six typical games and different initial settings of regret values validate our theoretical findings.
Our results have direct implications for real-world adaptive systems. In traffic networks, they explain how decentralized route choices synchronize over time, reducing congestion. Similarly, the convergence guarantees support predictive mechanisms for market stability. These insights provide a foundation for robust, decentralized decision-making in complex systems.

Looking forward, our theoretical framework offers strong potential for extension to other game types as long as these involve two-player subgame scenarios. One possible direction would be extensive-form games since each extensive-form game can be uniquely mapped to its corresponding normal-form game through equivalent strategies \cite{cressman2003evolutionary}. However, extending our framework to multiplayer games with more than two players introduces significant complexity, particularly due to nonlinear payoff functions and the challenge of modeling interactions between multiple agents. Addressing this will likely require advancements in novel techniques to account for nonlinearity in multiplayer systems. Recent advancements, such as those discussed in \cite{wang2024evolutionary}, may offer valuable insights and inspiration for addressing these challenges in future research.
Another interesting avenue would be to explore the impact of dynamic network structures on regret dynamics.
Incorporating such dynamics introduces additional challenges, including time-varying opponent distributions, fluctuating interaction neighborhoods, and non-stationary learning signals. Addressing these challenges would require fundamentally new mathematical techniques, such as stochastic network evolution models or time-dependent differential equations, which go beyond the scope of our current analysis.
Despite these open challenges, we believe our findings lay the groundwork for future studies and can serve as a starting point for extending regret-based learning to more flexible and adaptive environments.

\bibliographystyle{IEEEtran}
\bibliography{TNNLS-2024-P-34131}

\appendices
\section{Omitted Proofs in Section 3}
\subsection{Proof of Theorem~\ref{theorem:regretdynamics}}
It follows from \Cref{eq:Ria} that the change in $\mathbf{R}^t_{i_k}$ between two discrete time steps is
\begin{equation}
    \begin{split}
        \mathbf{R}_{i_k}^{t+1}-\mathbf{R}_{i_k}^{t} & =\frac{1}{t}\sum_{\tau=1}^{t}\mathbf{r}_{i_k}^{\tau}-\frac{1}{t-1}\sum_{\tau=1}^{t-1}\mathbf{r}_{i_k}^{\tau} \\
                                                    & =\frac{1}{t}(\mathbf{r}_{i_k}^{t}-\mathbf{R}_{i_k}^{t}).
    \end{split}
\end{equation}

\begin{equation}
    \mathbf{R}_{i_k}^{t+1}=\mathbf{R}_{i_k}^{t}+\frac{1}{t}(\mathbf{r}_{i_k}^{t}-\mathbf{R}_{i_k}^{t}).
\end{equation}
Suppose that the amount of time that passes between two successive time steps is $\delta\in (0,1]$. Since $\mathbf{R}_{i_k}(t)$ is defined and smooth around time $t+\delta$, we derive the continuous-time differential equation as
\begin{equation}
    \mathbf{R}_{i_k}^{t+\delta}=
    \mathbf{R}_{i_k}^{t}+\frac{\delta}{t}(\mathbf{r}_{i_k}^{t}-\mathbf{R}_{i_k}^{t}).
\end{equation}
Note that at each step $t$, an arbitrary agent $k$ in each population $i\in V$ interacts with all opponents from adjacent large-scale populations.

Consider a population $i$. We write the change in regrets of this population as follows,
Then, we rewrite \Cref{eq:PDF_R1} as
\begin{equation}
    \mathbf{R}_{i}^{t+\delta}-\mathbf{R}_{i}^{t}=\frac{\delta}{t}(\bar{\mathbf{r}}_{i}^{t}-\mathbf{R}_{i}^{t}).
\end{equation}

Next, we consider a test function $\theta(\mathbf{R}_i)$. Define
\begin{equation}
    Y=\frac{\mathbb{E}[\theta(\mathbf{R}_i^{t+\delta})]-\mathbb{E}[\theta(\mathbf{R}_i^{t})]}{\delta}.
\end{equation}
Applying Taylor series for $\theta(\mathbf{R}_i^{t+\delta})$ at $\theta(\mathbf{R}_i^{t})$, we obtain
\begin{equation}
    \begin{split}
        \theta(\mathbf{R}_i^{t+\delta})= & \theta(\mathbf{R}_i^{t})+\frac{\delta(\bar{\mathbf{r}}_i^t-\mathbf{R}_i^t)}{t}\partial_{\mathbf{R}_i}\theta(\mathbf{R}_i)              \\
                                         & +\frac{\delta^2}{2t^2}(\bar{\mathbf{r}}_i^t-\mathbf{R}_i^t)^\top\mathbf{H}_{\theta(\mathbf{R}_i)}(\bar{\mathbf{r}}_i^t-\mathbf{R}_i^t) \\
                                         & +\mathcal{O}[\frac{\delta^2}{t^2}(\bar{\mathbf{r}}_{i}^{t}-\mathbf{R}_{i}^{t})^2],
    \end{split}
\end{equation}
where $\mathbf{H}$ denotes the Hessian matrix. Hence, the expectation $\mathbb{E}[\theta(\mathbf{R}_i^{t+\delta})]$ is
\begin{equation}
    \begin{split}
        \mathbb{E}[\theta(\mathbf{R}_i^{t+\delta})] & =\mathbb{E}[\theta(\mathbf{R}_i^{t})]+\frac{\delta}{t}\mathbb{E}[\partial_{\mathbf{R}_i}\theta(\mathbf{R}_i^t)(\bar{\mathbf{r}}_i^t-\mathbf{R}_i^t)] \\
                                                    & =\frac{\delta^2}{2t^2}\mathbb{E}[(\bar{\mathbf{r}}_i^t-\mathbf{R}_i^t)^\top\mathbf{H}_{\theta(\mathbf{R}_i^t)}(\bar{\mathbf{r}}_i^t-\mathbf{R}_i^t)] \\
                                                    & +\frac{\delta^2}{2t^2}\mathbb{E}(\mathcal{O}[(\bar{\mathbf{r}}_{i}^{t}-\mathbf{R}_{i}^{t})^2]).
    \end{split}
\end{equation}
Moving term the $\mathbb{E}[\theta(\mathbf{R}_i^{t})]$ to the left side and dividing both sides by $\delta$, we recover the quantity $Y$, i.e.,
\begin{equation}
    \begin{split}
        Y= & \frac{1}{t}\mathbb{E}[\partial_{\mathbf{R}_i}\theta(\mathbf{R}_i^t)(\bar{\mathbf{r}}_i^t-\mathbf{R}_i^t)]                                                                                                        \\
           & +\frac{\delta}{2t^2}\mathbb{E}((\bar{\mathbf{r}}_i^t-\mathbf{R}_i^t)^\top\mathbf{H}_{\theta(\mathbf{R}_i^t)}(\bar{\mathbf{r}}_i^t-\mathbf{R}_i^t)+\mathcal{O}[(\bar{\mathbf{r}}_{i}^{t}-\mathbf{R}_{i}^{t})^2]).
    \end{split}
\end{equation}
Taking the limit of $Y$ with $\delta\to 0$, the contribution of the second term on the right side vanishes, yielding
\begin{equation}
    \begin{split}
        \lim_{\delta\to 0}Y & =\frac{1}{t}\mathbb{E}[\partial_{\mathbf{R}_i}\theta(\mathbf{R}_i^t)(\bar{\mathbf{r}}_i^t-\mathbf{R}_i^t)]                              \\
                            & =\frac{1}{t}\int p(\mathbf{R}_i,t) [\partial_{\mathbf{R}_i}\theta(\mathbf{R}_i^t)(\bar{\mathbf{r}}_i^t-\mathbf{R}_i^t)]d\mathbf{R}_i^t.
    \end{split}
\end{equation}
Applying integration by parts, we have
\begin{equation}
    \lim_{\delta\to 0}Y=0-\frac{1}{t}\int \theta(\mathbf{R}_i^t)\nabla\cdot[p(\mathbf{R}_i,t)(\bar{\mathbf{r}}_i^t-\mathbf{R}_i^t)]d\mathbf{R}_i^t,
\end{equation}
where we have leveraged that the probability mass $p(\mathbf{R}_i,t)$ at the boundary $\partial\Omega_i$ remains zero since $\mathbf{R}_i^0=0$ by assumption.
Then, according to the definition of $Y$,
\begin{equation}
    \begin{split}
        \lim_{\delta\to 0}Y & =\lim_{\delta\to 0}\int \theta(\mathbf{R}_i^t)\frac{p(\mathbf{R}_i,t+\delta)-p(\mathbf{R}_i,t)}{\delta}d\mathbf{R}_i \\
                            & =\lim_{\delta\to 0}\int\theta(\mathbf{R}_i^t) \partial_t p(\mathbf{R}_i,t)d\mathbf{R}_i.
    \end{split}
\end{equation}
Hence, we have the equality
\begin{equation}
    \begin{split}
         & \int\theta(\mathbf{R}_i^t) \partial_t p(\mathbf{R}_i,t)d\mathbf{R}_i                                                        \\
         & =-\frac{1}{t}\int \theta(\mathbf{R}_i^t)\nabla\cdot[p(\mathbf{R}_i,t)(\bar{\mathbf{r}}_i^t-\mathbf{R}_i^t)]d\mathbf{R}_i^t.
    \end{split}
\end{equation}
As $\theta$ is a test function, this leads to
\begin{equation}\label{eq:dpRdt}
    \partial_t p(\mathbf{R}_i,t)=-\frac{1}{t}\nabla\cdot[p(\mathbf{R}_i,t)(\bar{\mathbf{r}}_i^t-\mathbf{R}_i^t)].
\end{equation}
Rearranging the terms, we obtain \Cref{eq:PDF_R1} in the main paper. By the definition of expectation given a probability distribution, it is straightforward to obtain \Cref{eq:Eria_int} in the main paper. This concludes the proof. Q.E.D.

\begin{corollary}
    For any population $i\in V$, the total probability mass $p(\mathbf{R}_i,t)$ always remains conserved.
\end{corollary}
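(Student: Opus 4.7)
The plan is to show that the total probability mass $M(t):=\int_{\Omega_i} p(\mathbf{R}_i,t)\,d\mathbf{R}_i$ is constant in time by integrating the continuity equation of \Cref{theorem:regretdynamics} over the entire regret space $\Omega_i$ and converting the resulting volume integral of a divergence into a boundary flux. This is essentially the standard conservation argument attached to any continuity equation, so the work reduces to verifying that the hypotheses needed for the divergence theorem hold in our setting.

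First I would differentiate $M(t)$ under the integral sign (which is legitimate under mild regularity assumed throughout the paper) and substitute \Cref{eq:PDF_R1}, obtaining
\begin{equation*}
\frac{dM}{dt} = \int_{\Omega_i} \frac{\partial p(\mathbf{R}_i,t)}{\partial t}\,d\mathbf{R}_i = -\int_{\Omega_i} \nabla\!\cdot\!\Bigl[p(\mathbf{R}_i,t)\,\tfrac{\bar{\mathbf{r}}_i-\mathbf{R}_i}{t}\Bigr]\,d\mathbf{R}_i.
\end{equation*}
Next, I would invoke the divergence theorem to convert this into a surface integral over $\partial\Omega_i$,
\begin{equation*}
\frac{dM}{dt} = -\oint_{\partial\Omega_i} p(\mathbf{R}_i,t)\,\tfrac{\bar{\mathbf{r}}_i-\mathbf{R}_i}{t}\cdot\hat{n}\,dS,
\end{equation*}
and then appeal to the boundary condition already used in the proof of \Cref{theorem:regretdynamics}: the density $p(\mathbf{R}_i,t)$ vanishes on $\partial\Omega_i$ because the initial condition $\mathbf{R}_i^{0}=0$ places all mass in the interior and the velocity field $(\bar{\mathbf{r}}_i-\mathbf{R}_i)/t$ is bounded on compact sets, preventing probability from escaping to the boundary in finite time. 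The flux integrand therefore vanishes pointwise on $\partial\Omega_i$, yielding $dM/dt=0$ and hence $M(t)=M(0)=1$ for all $t\geq 0$.

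The main obstacle is justifying the vanishing of the boundary flux rigorously. If $\Omega_i$ is taken to be unbounded (for example $\mathbb{R}^{|S_i|}$ after relaxing regrets to negative values as in \Cref{eq:xia}), one must check that $p(\mathbf{R}_i,t)$ and the flux $p\cdot(\bar{\mathbf{r}}_i-\mathbf{R}_i)/t$ both decay fast enough at infinity for the surface term to vanish; if $\Omega_i$ is taken as a compact regret domain, one must verify that the dynamics are inward-pointing at the faces. Since the same vanishing boundary hypothesis was invoked during the integration-by-parts step in the proof of \Cref{theorem:regretdynamics} and accepted there, the corollary follows essentially by re-reading that step — the heavy lifting has already been done and the conservation property is an immediate byproduct.
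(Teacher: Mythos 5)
Your proposal is correct and follows essentially the same route as the paper: integrate the continuity equation over the regret space and kill the resulting boundary term using the vanishing of $p(\mathbf{R}_i,t)$ on $\partial\Omega_i$. The only cosmetic difference is that you apply the divergence theorem in one step, whereas the paper expands the divergence via the product rule and integrates by parts component-wise before observing the cancellation; your added discussion of the decay/boundedness conditions needed to justify the vanishing flux is in fact more careful than the paper's one-line appeal to the initial condition $\mathbf{R}_i^0=0$.
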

\begin{proof}
    Consider the time derivative of the total probability mass
    \begin{equation}
        \frac{d}{dt}\int p(\mathbf{R}_i,t)d\mathbf{R}_i.
    \end{equation}
    Apply the Leibniz rule to interchange differentiation and integration,
    \begin{equation}
        \frac{d}{dt}\int p(\mathbf{R}_i,t)d\mathbf{R}_i=\int\frac{\partial p(\mathbf{R}_i,t)}{\partial t}d\mathbf{R}_i.
    \end{equation}
    Substitute $\frac{\partial p(\mathbf{R}_i,t)}{\partial t}$ with \Cref{eq:dpRdt},
    \begin{equation}
        \begin{split}
             & \frac{d}{dt}\int p(\mathbf{R}_i,t)d\mathbf{R}_i
            =\int\frac{\partial p(\mathbf{R}_i,t)}{\partial t}d\mathbf{R}_i                                                                \\
             & =-\frac{1}{t}\int \nabla\cdot [p(\mathbf{R}_i,t)(\bar{\mathbf{r}}_i-\mathbf{R}_i)]d\mathbf{R}_i                             \\
             & =-\frac{1}{t}\int\sum_{s\in S_i} \partial_{\bar{R}_{is}}[p(\mathbf{R}_i,t)(\bar{r}_{is}-\bar{R}_{is})]d\mathbf{R}_i         \\
             & =-\frac{1}{t}\int\sum_{s\in S_i} [\partial_{\bar{R}_{is}}p(\mathbf{R}_i,t)](\bar{r}_{is}-\bar{R}_{is})d\mathbf{R}_i         \\
             & \;\;\;\;-\frac{1}{t}\int p(\mathbf{R}_i,t)\sum_{s\in S_i}[\partial_{\bar{R}_{is}}(\bar{r}_{is}-\bar{R}_{is})]d\mathbf{R}_i.
        \end{split}
    \end{equation}
    Apply integration by parts,
    \begin{equation}
        \begin{split}
             & \int\sum_{s\in S_i}[\partial_{R_{is}}p(\mathbf{R}_i,t)](\bar{r}_{is}-R_{is})d\mathbf{R}_i    \\
             & =0-\int p(\mathbf{R}_i,t)\sum_{s\in S_i}\partial_{R_{is}}(\bar{r}_{is}-R_{is})d\mathbf{R}_i,
        \end{split}
    \end{equation}
    where we have leveraged that the probability mass $p(\mathbf{R}_i,t)$ at the boundary remains zero due to the boundary condition $\mathbf{R}_i^0=0$. Hence, we have
    \begin{equation}
        \frac{d}{dt}\int p(\mathbf{R}_i,t)d\mathbf{R}_i=0.
    \end{equation}
\end{proof}

\subsection{Proof of Theorem~\ref{theorem:heto_dERdt}}
The time derivative of the mean regret $\bar{R}_{ia}$ of not having chosen action $a\in S_i$ is
\begin{equation}
    \frac{d\mathbb{E}(R_{ia})}{dt}=\frac{d}{dt}\int R_{ia}p(\mathbf{R}_i,t) d\mathbf{R}_i.
\end{equation}
Applying the Leibniz rule to interchange differentiation and integration, we have
\begin{equation}
    \frac{d\mathbb{E}(R_{ia})}{dt}=\frac{d}{dt}\int R_{ia}p(\mathbf{R}_i,t) d\mathbf{R}_i=\int R_{ia} \frac{\partial p(\mathbf{R}_i,t)}{\partial t} d\mathbf{R}_i.
\end{equation}
Then substitute $\frac{\partial p(\mathbf{R}_i,t)}{\partial t}$ with \Cref{eq:dpRdt}
\begin{equation}
    \begin{split}
        \frac{d\mathbb{E}(R_{ia})}{dt} & =-\frac{1}{t}\int R_{ia}\nabla\cdot [p(\mathbf{R}_i,t)(\bar{\mathbf{r}}_i-\mathbf{R}_i)]d\mathbf{R}_i                  \\
                                       & =-\frac{1}{t}\int R_{ia} \sum_{s\in S_i} \partial_{R_{is}}[p(\mathbf{R}_i,t)(\bar{r}_{is}-R_{is})] d\mathbf{R}_i       \\
                                       & =-\frac{1}{t} \int R_{ia} \sum_{s\in S_i} (\partial_{R_{is}}[p(\mathbf{R}_i,t)])(\bar{r}_{is}-R_{is}) d\mathbf{R}_i    \\
                                       & \;\;\;\;-\frac{1}{t}\int R_{ia}p(\mathbf{R}_i,t)\sum_{s\in S_i}[\partial_{R_{is}}(\bar{r}_{is}-R_{is})] d\mathbf{R}_i. \\
    \end{split}
\end{equation}
Apply integration by parts to the first term
\begin{equation}\label{eq:int1}
    \begin{split}
         & -\frac{1}{t} \int R_{ia} \sum_{s\in S_i} [\partial_{\bar{R}_{is}}p(\mathbf{R}_i,t)](\bar{r}_{is}-R_{is}) d\mathbf{R}_i   \\
         & =-\frac{1}{t}(0-\int p(\mathbf{R}_i,t)\partial_{R_{ia}}[R_{ia}(\bar{r}_{ia}-R_{ia})]d\mathbf{R}_i                        \\
         & \;\;\;\;-\int p(\mathbf{R}_i,t)R_{ia}\sum_{a'\neq a}[\partial_{R_{ia'}}(\bar{r}_{ia'}-R_{ia'})]d\mathbf{R}_i)            \\
         & =\frac{1}{t}\int p(\mathbf{R}_i,t)\partial_{R_{ia}}[R_{ia}(\bar{r}_{ia}-R_{ia})]d\mathbf{R}_i                            \\
         & \;\;\;\;+\frac{1}{t}\int p(\mathbf{R}_i,t)R_{ia}\sum_{a'\neq a}[\partial_{R_{ia'}}(\bar{r}_{ia'}-R_{ia'})]d\mathbf{R}_i,
    \end{split}
\end{equation}
where we have leveraged that the probability mass at the boundary remains zero. Hence, following \Cref{eq:int1}, we have
\begin{equation}\label{eq:dEriadt}
    \begin{split}
        \frac{d\mathbb{E}(R_{ia})}{dt} & =\frac{1}{t}\int p(\mathbf{R}_i,t)\partial_{R_{ia}}[R_{ia}(\bar{r}_{ia}-R_{ia})]d\mathbf{R}_i                           \\
                                       & \;\;\;\;+\frac{1}{t}\int p(\mathbf{R}_i,t)R_{ia}\sum_{a'\neq a}[\partial_{R_{ia'}}(\bar{r}_{ia'}-R_{ia'})]d\mathbf{R}_i \\
                                       & \;\;\;\;-\frac{1}{t}\int R_{ia}p(\mathbf{R}_i,t)\sum_{s\in S_i}[\partial_{R_{is}}(\bar{r}_{is}-R_{is})] d\mathbf{R}_i   \\
                                       & =\frac{1}{t}\int p(\mathbf{R}_i,t)(\bar{r}_{ia}-R_{ia})d\mathbf{R}_i                                                    \\
                                       & \;\;\;\;+\frac{1}{t}\int p(\mathbf{R}_i,t)R_{ia}\partial_{R_{ia}}(\bar{r}_{is}-R_{is})d\mathbf{R}_i                     \\
                                       & \;\;\;\;-\frac{1}{t}\int R_{ia}p(\mathbf{R}_i,t)\partial_{R_{ia}}(\bar{r}_{is}-R_{is})d\mathbf{R}_i                     \\
                                       & =\frac{1}{t}\bar{r}_{ia}\int p(\mathbf{R}_i,t)d\mathbf{R}_i-\frac{1}{t}\int R_{ia}p(\mathbf{R}_i,t)d\mathbf{R}_i        \\
                                       & =\frac{\bar{r}_{ia}-\bar{R}_{ia}}{t}.
    \end{split}
\end{equation}

Given that agents hold separate regrets, the joint probability density function $p(\mathbf{R},t)$ can be factorized as the product of each neighbor population regrets, i.e., $p(\mathbf{R},t)=\prod_j\in N p(\mathbf{R}_j,t)$. Then we repeat the mean instantaneous regret $\bar{r}_{ia}$, which has been given in \Cref{eq:Eria_int} that
\begin{equation}\label{eq:Eria_int2}
    \begin{split}
        \bar{r}_{ia}= & \int\int_{\prod_{j\in V_i}}[\mathbf{e}_{ia}-\mathbf{x}_i]^\top\sum_{j\in V_i}\frac{1}{\vert V_j\vert}\mathbf{A}_{ij}\bar{\mathbf{x}}_j \\
                      & (p(\mathbf{R}_i, t)\prod_{j\in V_i}p(\mathbf{R}_j,t))(d\mathbf{R}_i\prod_{j\in V_i}d\mathbf{R}_j),
    \end{split}
\end{equation}
where $\bar{\mathbf{x}}_j=\int_{\prod_{b\in S_j}}\frac{\exp(\lambda\mathbf{R}_{jb})}{\sum_{b'\in S_j}\exp(\lambda\mathbf{R}_{jb'})}(\prod_{b\in S_j}d\mathbf{R}_{jb})$.
Define $V_{ij}\coloneqq \{i\}+V_i$, $\bar{\mathbf{R}}\coloneqq\{\bar{\mathbf{R}}_i, \bar{\mathbf{R}}_j\}_{j\in V_i}$ and
\begin{equation}\label{eq:sfia}
    f_{ia}(\{\mathbf{R}_j\}_{j\in V_i})\coloneqq f_{ia}(\mathbf{R}) = [\mathbf{e}_{ia}-\mathbf{x}_i]^\top\sum_{j\in V_i}\frac{1}{\vert V_j\vert}\mathbf{A}_{ij}\bar{\mathbf{x}}_j.
\end{equation}
Applying Taylor series for $f_{ia}(\mathbf{R})$ at the mean regret $f_{ia}(\bar{\mathbf{R}})$, we have
\begin{equation}
    \begin{split}
        f_{ia}(\mathbf{R}) & \approx f_{ia}(\bar{\mathbf{R}})+\nabla f_{ia}(\bar{\mathbf{R}})(\mathbf{R}-\bar{\mathbf{R}})                                                                            \\
                           & +\frac{1}{2}(\mathbf{R}-\bar{\mathbf{R}})^\top \mathbf{H}_{f_{ia}(\bar{\mathbf{R}})}(\mathbf{R}-\bar{\mathbf{R}})+\mathcal{O}(\Vert \mathbf{R}-\bar{\mathbf{R}}\Vert^3),
    \end{split}
\end{equation}
where $\mathbf{H}$ denotes the Hessian matrix. Hence, we can rewrite \Cref{eq:Eria_int2}
\begin{scriptsize}
    \begin{equation}
        \begin{split}
            \bar{r}_{ia} & =\int \int_{\prod_{j\in V_i}} f_{ia}(\{\mathbf{R}_j\}_{j\in V_i})(p(\mathbf{R}_i,t)\prod_{j\in V_i}p(\mathbf{R}_j,t))(d\mathbf{R}_i\prod_{j\in V_i}d\mathbf{R}_j)                                      \\
                         & \approx f_{ia}(\bar{\mathbf{R}})+\int\nabla f_{ia}(\bar{\mathbf{R}})\mathbf{R}(\prod_{h\in V_{ij}}p(\mathbf{R}_h,t))(\prod_{h\in V_{ij}}d\mathbf{R}_h)-\nabla f_{ia}(\bar{\mathbf{R}})\bar{\mathbf{R}} \\
                         & \;\;\;\;+\frac{1}{2}\int(\mathbf{R}-\bar{\mathbf{R}})^\top \mathbf{H}_{f_{ia}(\bar{\mathbf{R}})}(\mathbf{R}-\bar{\mathbf{R}})(\prod_{h\in V_{ij}}p(\mathbf{R}_h,t))(\prod_{h\in V_{ij}}d\mathbf{R}_h)  \\
                         & \;\;\;\;+\int\mathcal{O}(\Vert \mathbf{R}-\bar{\mathbf{R}}\Vert^3)(\prod_{h\in V_{ij}}p(\mathbf{R}_h,t))(\prod_{h\in V_{ij}}d\mathbf{R}_h).
        \end{split}
    \end{equation}
\end{scriptsize}

It's easy to observe that the second and the third term can be offset. Moreover, for any two populations $h, h'\in V_{ij}$, the regrets $\mathbf{R}_h$, $\mathbf{R}_{h'}$ about these two populations are separate and independent. Hence, the covariance of these regrets is zero. By applying the moment closure approximation, we have
\begin{equation}\label{eq:eria}
    \begin{split}
        \bar{r}_{ia} & \approx f_{ia}(\bar{\mathbf{R}})+\frac{1}{2}\sum_{h\in V_{ij}}\sum_{s_h\in S_h}[\frac{\partial f_{ia}(\bar{\mathbf{R}})}{\partial R_{hs_h}}]^2\operatorname{Var}(R_{hs_h}) \\
                     & =f_{ia}(\bar{\mathbf{R}})+\frac{1}{2}[f''(\{\bar{\mathbf{R}}_j\}_{j\in V_i})\operatorname{Var}(\{\mathbf{R}_j\}_{j\in V_i})]
    \end{split}
\end{equation}
Substitute \Cref{eq:eria} back to \Cref{eq:dEriadt},
\begin{small}
    \begin{equation}
        \frac{d\bar{R}_{ia}}{dt}\approx \frac{1}{t}[f_{ia}(\{\bar{\mathbf{R}}_j\}_{j\in V_i})-\bar{R}_{ia}]+\frac{1}{2t}[f''(\{\bar{\mathbf{R}}_j\}_{j\in V_i})\operatorname{Var}(\{\mathbf{R}_j\}_{j\in V_i})].
    \end{equation}
\end{small}
This concludes the proof.

\subsection{Proof of Theorem~\ref{theorem:VarDecay}}
Without loss of generality, we consider the variance of the regret $\operatorname{Var}(R_{ia})$ about action $a$ of population $i$. Note that
\begin{equation}
    \operatorname{Var}(R_{ia})=\mathbb{E}[R_{ia}^2]-(\bar{R}_{ia})^2.
\end{equation}
Hence, we have
\begin{equation}\label{eq:svardR}
    \begin{split}
        \frac{d\operatorname{Var}(R_{ia})}{dt}= \frac{d\mathbb{E}[R_{ia}^2]}{dt}-2\bar{R}_{ia}\frac{d \bar{R}_{ia}}{dt}.
    \end{split}
\end{equation}
Applying the Leibniz rule to interchange differentiation and integration, we have
\begin{equation}
    \frac{d\mathbb{E}(R_{ia}^2)}{dt}=\frac{d}{dt}\int R_{ia}^2p(\mathbf{R}_i,t) d\mathbf{R}_i=\int R_{ia}^2\frac{\partial p(\mathbf{R}_i,t)}{\partial t} d\mathbf{R}_i.
\end{equation}
Then substitute $\frac{\partial p(\mathbf{R}_i,t)}{\partial t}$ with \Cref{eq:dpRdt}
\begin{equation}\label{eq:dria2dt}
    \begin{split}
        \frac{d\mathbb{E}(R_{ia}^2)}{dt} & =-\frac{1}{t}\int R_{ia}^2\nabla\cdot[p(\mathbf{R}_i,t)(\bar{\mathbf{r}}_i-\mathbf{R}_i)]d\mathbf{R}_i                       \\
                                         & =-\frac{1}{t}\int R_{ia}^2\sum_{s\in S_i}\partial_{R_{is}}[p(\mathbf{R}_i,t)(\bar{r}_{is}-R_{is})]d\mathbf{R}_i              \\
                                         & =-\frac{1}{t}\int R_{ia}^2\sum_{s\in S_i}[\partial_{R_{is}}p(\mathbf{R}_i,t)](\bar{r}_{is}-R_{is})d\mathbf{R}_i              \\
                                         & \;\;\;\;-\frac{1}{t}\int R_{ia}^2 p(\mathbf{R}_i,t)\sum_{s\in S_i}[\partial_{R_{is}}(\bar{r}_{is}-R_{is}]d\mathbf{R}_i       \\
                                         & =\frac{1}{t}\int p(\mathbf{R}_i,t) \sum_{s\in S_i}\partial_{R_{is}}[R_{ia}^2
        (\bar{r}_{is}-R_{is})]d\mathbf{R}_i                                                                                                                             \\
                                         & \;\;\;\;-\frac{1}{t}\int R_{ia}^2 p(\mathbf{R}_i,t)\sum_{s\in S_i}[\partial_{R_{is}}(\bar{r}_{is}-R_{is})]d\mathbf{R}_i      \\
                                         & =\frac{1}{t}\int p(\mathbf{R}_i,t) \partial_{R_{ia}}[R_{ia}^2
        (\bar{r}_{ia}-R_{ia})]d\mathbf{R}_i                                                                                                                             \\
                                         & \;\;\;\;+\frac{1}{t}\int p(\mathbf{R}_i,t)R_{ia}^2 \sum_{a'\neq a}\partial_{R_{ia'}}[(\bar{r}_{ia'}-R_{ia'})]d\mathbf{R}_i   \\
                                         & \;\;\;\;-\frac{1}{t}\int R_{ia}^2 p(\mathbf{R}_i,t)\sum_{s\in S_i}[\partial_{R_{is}}(\bar{r}_{is}-R_{is})]d\mathbf{R}_i      \\
                                         & =\frac{1}{t}\int 2p(\mathbf{R}_i,t) R_{ia}(\bar{r}_{ia}-R_{ia})d\mathbf{R}_i                                                 \\
                                         & \;\;\;\;+\frac{1}{t}\int p(\mathbf{R}_i,t) R_{ia}^2 \partial_{R_{ia}}(\bar{r}_{ia}-R_{ia})d\mathbf{R}_i                      \\
                                         & \;\;\;\;-\frac{1}{t}\int p(\mathbf{R}_i,t) R_{ia}^2\partial_{R_{ia}}(\bar{r}_{ia}-R_{ia})d\mathbf{R}_i                       \\
                                         & =\frac{1}{t}\int 2p(\mathbf{R}_i,t) R_{ia}\bar{r}_{ia}d\mathbf{R}_i-\frac{1}{t}\int 2p(\mathbf{R}_i,t) R_{ia}^2d\mathbf{R}_i \\
                                         & =\frac{2\bar{r}_{ia}\bar{R}_{ia}-2\mathbb{E}[R_{ia}^2]}{t}.
    \end{split}
\end{equation}
Combining \Cref{eq:dEriadt} and \Cref{eq:dria2dt}, we have
\begin{equation}
    \begin{split}
        \frac{d\operatorname{Var}(R_{ia})}{dt} & = \frac{d\mathbb{E}[R_{ia}^2]}{dt}-2\bar{R}_{ia}\frac{d\bar{R}_{ia}}{dt}                                 \\
                                               & =\frac{2\bar{r}_{ia}\bar{R}_{ia}-2\mathbb{E}[R_{ia}^2]}{t}-\frac{2\bar{r}_{ia}\bar{R}_{ia}-2R_{ia}^2}{t} \\
                                               & =-\frac{2\mathbb{E}[R_{ia}^2]-2R_{ia}^2}{t}                                                              \\
                                               & =-\frac{2\operatorname{Var}(R_{ia})}{t}.
    \end{split}
\end{equation}
Rearrange the equation and apply integration; then, we have
\begin{equation}
    \begin{split}
        \frac{d\operatorname{Var}(R_{ia})}{dt}                              & =-\frac{2\operatorname{Var}(R_{ia})}{t} \\
        \int\frac{1}{\operatorname{Var}(R_{ia})}d\operatorname{Var}(R_{ia}) & =-\int\frac{2}{t}dt                     \\
        \operatorname{Var}(R_{ia})                                          & =\frac{1}{t^2}\sigma^2(R_{ia}),
    \end{split}
\end{equation}
where $\sigma^2(R_{ia})$ is the initial variance of $R_{ia}$. This concludes the proof.

\section{Omitted Proof in Section 4}
\subsection*{Proof of Theorem~\ref{theorem:convergence}}

In this section, we develop the necessary technical framework for the proof of Theorem~\ref{theorem:convergence}.
Theorem~\ref{theorem:convergence} is based upon \textcolor{black}{four} critical lemmas.

Prior to starting our convergence proof, we leverage the notion of the asymptotically autonomous dynamical system formally defined in \ref{def:asy}.

First of all, we prove Lemma~\ref{lemma:heto_dERdt}. Making use of the chain rule, we obtain
\begin{equation}\label{eq:dRidtau}
    \frac{d\mathbb{E}[\mathbf{R}_i]}{d\tau}=\frac{d\mathbb{E}[\mathbf{R}_i]}{dt}\frac{dt}{d\tau}=\frac{\bar{\mathbf{r}}_i-\mathbf{R}_i}{e^\tau}e^\tau=\bar{\mathbf{r}}_i-\mathbf{R}_i.
\end{equation}
Theorem~\ref{theorem:VarDecay} states that the variance of beliefs
eventually goes to zero, implying that over time the system heterogeneity will vanish and agents will develop the same regrets and play the same policy.
Then the mean regret of such a homogeneous population is equivalent to the individual agent's regret among the population.
We can write the dynamic for the homogeneous population where agents hold the same policy after time reparameterization as
\begin{equation}
    \frac{d\mathbf{R}_i}{d\tau}=\mathbf{r}_i-\mathbf{R}_i, \forall i\in V.
\end{equation}
In conjunction with Definition~\ref{def:asy}, we establish Lemma~\ref{lemma:heto_dERdt}.

Then, we prove Lemma~\ref{lemma:dxdt}.
Applying the chain rule, we have
\begin{equation}
    \begin{split}
        \frac{\partial x_{ia}}{\partial\tau} & =\frac{\partial x_{ia}}{\partial R_{ia}}\frac{\partial R_{ia}}{\partial \tau}+\cdots+\frac{\partial x_{ia}}{\partial R_{is}}\frac{\partial R_{is}}{\partial \tau}           \\
                                             & =\frac{\partial x_{ia}}{\partial R_{ia}}\frac{\partial R_{ia}}{\partial \tau}+\sum_{a'\neq a}\frac{\partial x_{ia}}{\partial R_{ia'}}\frac{\partial R_{ia'}}{\partial \tau}
    \end{split}
\end{equation}
Consider the term $\frac{\partial x_{ia}}{\partial R_{ia}}$ based on \Cref{eq:xia},
\begin{equation}
    \begin{split}
         & \ln x_{ia}=\ln\frac{\exp(\lambda R_{ia})}{\sum_{s\in S_i}\exp(\lambda R_{is})}=\lambda R_{ia}-\ln\sum_{s\in S_i}\exp(\lambda R_{is}),         \\
         & \frac{\partial\ln x_{ia}}{\partial R_{ia}}=\lambda-\frac{\lambda\exp(\lambda R_{ia})}{\sum_{s\in S_i}\exp(\lambda R_{is})}=\lambda(1-x_{ia}).
    \end{split}
\end{equation}
Apply the chain rule,
\begin{equation}
    \frac{\partial x_{ia}}{\partial R_{ia}}=\lambda x_{ia}(1-x_{ia}).
\end{equation}
Then, apply the same process to $\frac{\partial x_{ia}}{\partial R_{ia'}}$
\begin{equation}
    \frac{\partial x_{ia}}{\partial R_{ia'}}=\frac{\partial x_{ia}}{\partial \ln x_{ia}}\frac{\partial \ln x_{ia}}{\partial R_{ia'}}=-\lambda x_{ia}x_{ia'},
\end{equation}
where $a'\neq a \in S_i$.
Hence, we have
\begin{equation}
    \frac{\partial x_{ia}}{\partial t}=x_{ia}(1-x_{ia})\frac{\partial R_{ia}}{\partial t}-\sum_{a'\neq a}\lambda x_{ia}x_{ia'}\frac{\partial R_{ia'}}{\partial t}
\end{equation}
Substitute $\frac{\partial R_{ia}}{\partial\tau}$ to above equation,
\begin{equation}
    \begin{split}
        \frac{\partial x_{ia}}{\partial \tau} & =\lambda x_{ia}(1-x_{ia})(r_{ia}-R_{ia})-\sum_{a'\neq a}\lambda x_{ia}x_{ia'}(r_{ia'}-R_{ia'}) \\
                                              & =\lambda x_{ia}[(r_{ia}-R_{ia})-\sum_{s\in S_i}x_{is}(r_{is}-R_{is})].
    \end{split}
\end{equation}
Since $\sum_{s\in S_i}x_{is}=1$, we have
\begin{equation}\label{eq:dxiadt2}
    \frac{\partial x_{ia}}{\partial \tau}=\lambda x_{ia}[r_{ia}-\sum_{s\in S_i}x_{is}r_{is}-\sum_{s\in S_i}x_{is}(R_{ia}-R_{is})].
\end{equation}
Since $\frac{x_{ia}}{x_{ia'}}=\frac{\exp (\lambda R_{ia})}{\exp (\lambda R_{ia'})}$, we obtain
\begin{equation}\label{eq:exptrick}
    \sum_{s\in S_i}x_{is}\ln(\frac{x_{is}}{x_{ia}})=\lambda \sum_{s\in S_i}x_{is}(R_{is}-R_{ia}),
\end{equation}
which give us
\begin{equation}
    \frac{\partial x_{ia}}{\partial \tau}=x_{ia}[\lambda (r_{ia}-\sum_{s\in S_i}x_{is}r_{is})+\sum_{s\in S_i}x_{is}\ln(\frac{x_{is}}{x_{ia}})].
\end{equation}
Recalling the definition of $r_{ia}$, we arrange the $r_{ia}-\sum_{s\in S_i}x_{is}r_{is}$ term as
\begin{small}
    \begin{equation}
        \begin{split}
             & r_{ia}-\sum_{s\in S_i}x_{is}r_{is}                                                                                                                                        \\
             & =u_{i}(a,\bar{\mathbf{x}}_{-i})-u_{i}(\mathbf{x}_i,\bar{\mathbf{x}}_{-i})-\sum_{s\in S_i}x_{is}[u_{i}(s,\bar{\mathbf{x}}_{-i})-u_{i}(\mathbf{x}_i,\bar{\mathbf{x}}_{-i})] \\
             & =u_{i}(a,\bar{\mathbf{x}}_{-i})-\sum_{s\in S_i}x_{is}u_{i}(s,\bar{\mathbf{x}}_{-i})                                                                                       \\
             & =e_{ia}^\top\mathbf{u}_i(\bar{\mathbf{x}}_{-i})-\mathbf{x}_i\mathbf{u}_i(\bar{\mathbf{x}}_{-i}).
        \end{split}
    \end{equation}
\end{small}
where $\mathbf{u}_i(\bar{\mathbf{x}}_{-i})=\sum_{j\in V_i}\frac{1}{\vert V_j\vert}\mathbf{A}_{ij}\bar{\mathbf{x}}_{j}$.
Substitute \Cref{eq:exptrick} back to \Cref{eq:dxiadt2},
\begin{equation}
    \frac{\partial x_{ia}}{\partial \tau}=x_{ia}[\lambda u_i(a,\bar{\mathbf{x}}_{-i})-\lambda\mathbf{x}_i\mathbf{u}_i(\bar{\mathbf{x}}_{-i})+\sum_{s\in S_i}x_{is}\ln(\frac{x_{is}}{x_{ia}})].
\end{equation}
Rearranging the terms, we obtain \Cref{eq:dRidtau}. In conjunction with Definition~\ref{def:WZSPNG}, we obtain Lemma~\ref{lemma:dxdt}.

Now we prove Theorem~\ref{theorem:convergence}.
We observe that \Cref{eq:xlimiteq} (the limit equation of policy dynamics) in Lemma~\ref{lemma:dxdt} is reduced to a smooth Q-learning dynamic with the learning rate $\alpha=1$.
Leonardos et al.~\cite{nips2021leonardos, 2021leonardosExplorationExploitationMultiAgentLearning} demonstrate the convergence in n-player weighted zero-sum network games and n-player weighted potential games.
We repeat their convergence result as follows.

To figure out whether the large-time behavior of our policy dynamics is in perfect alignment with that of the SQL dynamic,
we leverage  Thieme~\cite{1992thiemeConvergenceResultsPoincareBendixson}'s seminal result on the connection between the limit behavior of an asymptotically autonomous system $\Phi$ (i.e. the policy dynamics) and its limit equation $\Theta$ (i.e. \Cref{eq:xlimiteq}).

This means that if the equilibria of $\Theta$ are isolated and any solution of $\Theta$ converges to one of them, every solution of $\Phi$ converges to an equilibrium of $\Theta$ as well. Combining the above lemmas, every solution of our policy dynamics also converges to a unique QRE in a weighted zero-sum game and to a compact connected set of QRE in a weighted potential game. Q.E.D.

\end{document}